\crefname{equation}{}{} 
\colorlet{refkey}{orange!20}
\colorlet{labelkey}{blue!30}
\numberwithin{equation}{section}
\newtheorem{theorem}{Theorem}[section]
\newtheorem{proposition}[theorem]{Proposition}
\newtheorem{lemma}[theorem]{Lemma}
\crefname{claim}{Claim}{Claims}
\newtheorem*{question*}{Question}
\theoremstyle{definition}
\newtheorem*{definition*}{Definition}
\theoremstyle{remark}
\newtheorem*{remark}{Remark}
\crefname{equation}{}{} 
\crefname{algocfline}{Algorithm}{Algorithms}
\colorlet{refkey}{orange!20}
\colorlet{labelkey}{blue!30}
\newcommand{\mb}{\mathbb}
\newcommand{\mc}{\mathcal}
\newcommand{\ol}{\overline}
\newcommand{\tsc}{\textsc}
\newcommand{\init}{\textsc{in}}
\newcommand{\out}{\textsc{f}}
\newcommand{\assign}{\leftarrow}
\DeclareMathOperator{\Supp}{Supp}
\author{
Vishesh Jain \\
Stanford University \\
\texttt{vishesh.vj@gmail.com}
\and
Ashwin Sah \\
Massachusetts Institute of Technology \\
\texttt{asah@mit.edu}
\and
Mehtaab Sawhney \\
Massachusetts Institute of Technology \\
\texttt{msawhney@mit.edu}
}
\date{}
\begin{document}
\title{Perfectly Sampling $k\ge (8/3 +o(1))\Delta$-Colorings in Graphs}
\begin{titlepage}
\clearpage\maketitle
\thispagestyle{empty}
\begin{abstract}
We present a randomized algorithm which takes as input an undirected graph $G$ on $n$ vertices with maximum degree $\Delta$, and a number of colors $k \geq (8/3 + o_{\Delta}(1))\Delta$, and returns -- in expected time $\tilde{O}(n\Delta^{2}\log{k})$ -- a proper $k$-coloring of $G$ distributed \emph{perfectly} uniformly on the set of all proper $k$-colorings of $G$. Notably, our sampler breaks the barrier at $k = 3\Delta$ encountered in recent work of Bhandari and Chakraborty [STOC 2020]. We also sketch how to modify our methods to relax the restriction on $k$ to $k \geq (8/3 - \epsilon_0)\Delta$ for an absolute constant $\epsilon_0 > 0$.  

As in the work of Bhandari and Chakraborty, and the pioneering work of Huber [STOC 1998], our sampler is based on Coupling from the Past [Propp\&Wilson, Random Struct.\ Algorithms, 1995] and the bounding chain method [Huber, STOC 1998; H\"aggstr\"om\& Nelander, Scand.\ J.\ Statist., 1999]. Our innovations include a novel bounding chain routine inspired by Jerrum's analysis of the Glauber dynamics [Random Struct.\ Algorithms, 1995], as well as a preconditioning routine for bounding chains which uses the algorithmic Lov\'asz Local Lemma [Moser\&Tardos, J.ACM, 2010]. 
\end{abstract}
\end{titlepage}
\newpage
\section{Introduction}\label{sec:introduction}
Let $G = (V(G), E(G))$ be an undirected graph with vertex set $V(G)$ and edge set $E(G)$. For an integer $k \ge 1$, a \emph{(proper) $k$-coloring} of $G$ is a map $\varphi: V(G) \to [k](:=\{1,\dots, k\})$ such that for all $\{u,v\}\in E(G)$, $\varphi(u)\neq \varphi(v)$. 
In this paper, we study the problem of efficiently \emph{perfectly} sampling a $k$-coloring of a graph with maximum degree $\Delta$, uniformly at random from among all such colorings.

\subsection{Sampling \texorpdfstring{$k$}{k}-colorings: approximately and perfectly}
The algorithmic problem of sampling a uniformly random $k$-coloring of a graph with maximum degree $\Delta$ has been intensely studied (see, e.g., the references in \cite{FV07,CDMPP19}). Perhaps the major open problem in this area is to devise -- for all $k \geq \Delta + 2$ --  a (randomized) algorithm, with running time polynomial in $n = |V(G)|$, $k$, and $\ln(1/\epsilon)$, which outputs a distribution within total variation distance $\epsilon$ of the uniform distribution on the space of $k$-colorings; the lower bound corresponds to the minimum number of colors needed to ensure that the space of $k$-colorings is connected in a certain sense, and is within $1$ color of the classical theorem of Brooks which asserts that $\Delta+1$ colors are sufficient to color any graph of maximum degree $\Delta$ (and necessary for cliques and cycles of odd length). 

Recall that the Glauber dynamics on the space of $k$-colorings is the Markov chain which, at a coloring $\chi$, chooses a vertex uniformly at random from $V(G)$, and updates its color to be a color chosen uniformly at random from among those not already occupied by its neighbors; it is readily seen that this Markov chain is ergodic for $k\geq \Delta + 2$, and has the uniform distribution on the space of $k$-colorings as its stationary distribution. 
In a seminal work, Jerrum \cite{Jer95} showed that the Glauber dynamics mixes rapidly for $k > 2\Delta$, thereby providing an efficient algorithm for approximately sampling $k$-colorings for all $k > 2\Delta$. The lower bound on $k$ was relaxed by Vigoda \cite{Vig00} to $11\Delta/6$ by using a different Markov chain based on `flip dynamics', although by using comparison techniques, his proof also implies rapid mixing of the Glauber dynamics for $k > 11\Delta/6$. Recently, Chen, Delcourt, Moitra, Perarnau, and Postle \cite{CDMPP19} sharpened Vigoda's analysis to further relax the lower bound to $k > (11/6 - \epsilon_0)\Delta$, where $\epsilon_0$ is a small absolute constant $(\sim10^{-4})$. Under additional assumptions on the degree and girth of $G$, even less restrictive lower bounds on $k$ are known (see, e.g., the references in \cite{FV07,CDMPP19,BC20}).\\ 

The problem of efficiently (i.e.\ polynomial in $n$ and $k$) perfectly sampling $k$-colorings, which is the focus of this paper, was first studied by Huber \cite{Hub98}, who used Coupling from the Past (CFTP) \cite{PW96} along with  the bounding chain method \cite{Hag98,Hub98} to devise an efficient algorithm for perfectly sampling $k$-colorings, provided that $k > \Delta(\Delta + 2)$. 
One of the motivations of Huber's work was that using perfect sampling algorithms in the general sampling-to-counting framework of Jerrum, Valiant, and Vazirani \cite{JVV86} can potentially be used to obtain faster algorithms for the problem of \emph{approximately counting} the number of $k$-colorings of a graph, than can be obtained from approximate sampling algorithms \cite[Theorem~7]{Hub98}. Another motivation for his work was that -- in contrast to the approximate sampling algorithms discussed above, which always need to be run for the theoretical worst-case time in order to output a distribution guaranteed to be close to the uniform distribution -- perfect sampling algorithms based on CFTP have the attractive property of coming with a well-defined termination criterion, which may be reached in practice well before the time suggested by worst-case analysis (in fact, for implementing such an algorithm, the practitioner need not have \emph{any} knowledge of the worst-case running time).

In recent years, by exploiting the connection, due to Jerrum, Valiant, and Vazirani \cite[Theorem~3.3]{JVV86}, between \emph{deterministic} approximate counting and perfect sampling, several improvements of Huber's result, which are efficient for graphs of \emph{constant} maximum degree, have been obtained. Using the correlation decay technique, Gamarnik and Katz \cite{GK12}, respectively Lu and Yin \cite{LY13}, obtained perfect samplers with $k > 2.78\Delta$ for triangle free graphs, respectively $k > 2.58 \Delta$ for general bounded degree graphs, both running in time $O(n^{O(\log k)})$; using  Barvinok's polynomial interpolation method, Liu, Sinclair, and Srivastava \cite{LSS19} provided a perfect sampler for $k \ge 2\Delta$ running in time $O(n^{\exp(\text{poly}(k))})$.

For general graphs, where $\Delta$ (or $k$) is allowed to grow with $n$, the only improvement of Huber's result is the recent work of Bhandari and Chakraborty \cite{BC20}, who provided an efficient CFTP based perfect sampler for $k > 3\Delta$. The natural question left open by their work is whether one can devise efficient perfect samplers for $k < 3\Delta$ -- indeed, the sampler in \cite{BC20} may be viewed as implementing a two-stage process, with natural barriers at $k=3\Delta$ encountered (for different reasons) at both the stages (see \cref{sec:bc} for a quick overview and \cite[Section~3]{BC20} for a more detailed explanation).

\subsection{Our result}
As our main result, we obtain a CFTP based perfect sampler for $k > (8/3 + o(1))\Delta$. Pleasantly, our sampler has the same expected running time as in \cite{BC20}. 

\begin{theorem}
\label{thm:main}
There is a (CFTP-based) randomized algorithm $\tsc{PerfectSampler}$ (\cref{alg:perfect-sampler}) and an absolute constant $C_{\ref{thm:main}} > 0$ such that the following holds. Given an undirected graph $G$ with maximum degree $\Delta$, and a number of colors $k$ with $k \geq 8\Delta/3 + C_{\ref{thm:main}}\sqrt{\Delta\log{\Delta}}$, $\tsc{PerfectSampler}$ returns a uniformly random $k$-coloring of $G$, and runs in expected time $O(T_1 + T_2 + T_3)$, where $T_1,T_2,T_3 = O(n(\log n)^2\Delta^2(\log\Delta)(\log k))$.
\end{theorem}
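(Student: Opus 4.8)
The plan is to build $\tsc{PerfectSampler}$ as a Coupling-from-the-Past (CFTP) procedure driven by a bounding chain for the single-site Glauber dynamics on proper $k$-colorings of $G$, and to show that this bounding chain coalesces in expected $\tilde{O}(n\Delta)$ single-site updates, each implementable in $\tilde{O}(\Delta)$ time. Recall the framework: a \emph{bounding chain} is a Markov chain on tuples $\mc S=(S_v)_{v\in V(G)}$ with $\emptyset\neq S_v\subseteq[k]$, run alongside the Glauber dynamics through a shared stream of randomness, so that whenever the Glauber chain is started (in the past) from \emph{any} proper coloring $\varphi$ lying in the ``box'' $\prod_v S_v$, it stays in the box thereafter. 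Starting from the only admissible initial state $S_v=[k]$ for all $v$ and running until every $S_v$ is a singleton (\emph{coalescence}), the common coloring is exactly uniform on proper $k$-colorings, and CFTP's standard reuse-the-randomness-with-doubling bookkeeping converts a bound on the expected coalescence time into the stated expected running time $O(T_1+T_2+T_3)$. So the whole task is to exhibit a bounding chain whose coalescence time is $\tilde{O}(n\Delta)$ whenever $k\ge 8\Delta/3+C_{\ref{thm:main}}\sqrt{\Delta\log\Delta}$.

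The core ingredient is a new bounding-chain update rule whose design follows Jerrum's path-coupling analysis of the Glauber dynamics. Call $v$ \emph{clean} if $|S_v|=1$ and \emph{dirty} otherwise, and, when $v$ is selected for update, call a colour \emph{definitely available} to $v$ if it lies in no $S_w$ with $w\sim v$, and \emph{ambiguously available} if it avoids every clean neighbour's colour but lies in $S_w$ for some dirty $w\sim v$; the number of ambiguously available colours is controlled by the local potential $\Phi_v:=\sum_{w\sim v,\ w\text{ dirty}}(|S_w|-1)$, while at least $k-\deg(v)-\Phi_v$ colours are definitely available. The update couples the Glauber choice at $v$ across all colorings compatible with $\mc S$ as follows: if $v$ has no dirty neighbour, the forbidden palette is fully determined and $S_v$ collapses to a singleton; otherwise, with probability $\approx(k-\Delta-\Phi_v)/(k-\Delta)$ the common choice is a definitely-available colour and $S_v$ again collapses, and with the remaining probability $S_v$ becomes the set of ambiguously-available colours, of excess at most $\Phi_v$. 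Tracking a potential such as $\Phi:=\sum_v(|S_v|-1)$ (perhaps paired lexicographically with $|\{v:\text{dirty}\}|$), one shows that the drift from these collapses strictly dominates the drift from clean vertices being ``infected'' by dirty neighbours. Executing Jerrum's balance in this setting --- while paying for the fact that a single update can enlarge $|S_v|$ by as much as $\Phi_v$, rather than by $1$ as in the true chain --- is exactly what pushes the threshold from $2\Delta$ up to $(8/3)\Delta$; the additive $\sqrt{\Delta\log\Delta}$ is the slack needed to upgrade the per-step drift into a high-probability coalescence-time bound over a sweep via Azuma/Chernoff.

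I expect the main obstacle to be that this drift computation is valid only in a \emph{locally well-spread} state, whereas from the forced initial box $S_v=[k]$ one has $\Phi_v\approx\deg(v)\cdot(k-1)\gg k-\Delta$, the collapse probabilities are vacuous, and there is no drift near the start of time. To get around this I would prepend a \emph{preconditioning phase}: a separately analyzed segment of the coupled chain whose sole purpose is to carry $\mc S$, in $\tilde{O}(n\Delta)$ steps, from the trivial box into the region where every $\Phi_v$ is at most a suitable constant multiple of $\Delta$, so that the contraction above switches on. The key point is that ``$v$ fails to become locally spread'' is an event depending only on a bounded-radius neighbourhood of $v$ and having bounded probability, so the algorithmic Lov\'asz Local Lemma of Moser--Tardos both certifies that a globally good configuration exists and furnishes a near-linear-time constructive route to one, all while keeping the composite process a legitimate bounding chain so that CFTP correctness is preserved. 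Altogether $\tsc{PerfectSampler}$ runs in three stages --- preconditioning (time $T_1$), the Jerrum-style bulk contraction down to a small global potential (time $T_2$), and a cleanup stage extinguishing the last dirty vertices (time $T_3$) --- each of which one checks costs $T_i=O(n(\log n)^2\Delta^2(\log\Delta)(\log k))$; combining the three with the CFTP bookkeeping gives \cref{thm:main}. Beyond pinning down the constant $8/3$ in the drift estimate, the points needing the most care are verifying that the preconditioning, contraction, and cleanup updates are each valid bounding-chain moves, and that the three phases compose without the potential re-inflating between them.
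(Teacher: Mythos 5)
Your overall skeleton --- CFTP with a bounding chain, a Jerrum-style coupling inside the bounding-chain update, and a Moser--Tardos preconditioning step --- matches the paper in spirit, but several load-bearing claims in the proposal are wrong or left vague exactly where the real difficulties lie.

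First, the role of the algorithmic LLL is misplaced. You suggest running a ``preconditioning phase'' of the coupled chain, certified and driven by Moser--Tardos, to carry the bounding list from the trivial box into a locally well-spread region, ``while keeping the composite process a legitimate bounding chain.'' This does not type-check: Moser--Tardos resamples the variables of the LLL instance, not the randomness of the Glauber dynamics, so there is no way to interpret its resampling steps as valid coupled Glauber updates without further argument. In the paper, the LLL is used \emph{once, statically, and outside the chain}: it produces a set $\mathcal S\subseteq V(G)$ (a ``seeding set'') such that every vertex has at most $(1-\eta)\Delta$ neighbours outside $\mathcal S$ and at most $\Delta/3$ neighbours inside $\mathcal S$, with $\eta = 1/3 - 2\sqrt{(\log\Delta)/\Delta}$. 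This is a graph-theoretic preprocessing step independent of any bounding list. The preconditioning of the bounding chain itself is accomplished entirely by deterministic phases (using $\tsc{compress}$ together with new primitives $\tsc{seeding}$ and $\tsc{disjoint}$) whose success is proved by combinatorial casework (\cref{lem:phase-1,lem:phase-2,lem:phase-3}), not by concentration.

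Second, the explanation for the two numerical constants is wrong. The additive $\sqrt{\Delta\log\Delta}$ term does not come from ``upgrading the per-step drift via Azuma/Chernoff''; the drift phase is analyzed with Huber's random-walk lemma (\cref{thm:rand-walk}) and only needs $k > 5\Delta/2$ (\cref{lem:drift}). The $\sqrt{\Delta\log\Delta}$ slack is the price of letting $\eta$ approach $1/3$ in the LLL-produced seeding set. Likewise, the $8/3$ does not emerge from ``paying for $|S_v|$ enlarging by $\Phi_v$''; it comes out of a concrete optimization in the Phase~3 analysis (\cref{eq:optimization}--\cref{eq:optimization-3}), which in turn depends on a careful choice of the set $A$ fed to $\tsc{compress}$ and on the seeding-set degree constraints.

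Third, you never pin down the mechanism that actually lifts Jerrum's trick to bounding chains. The paper's $\tsc{disjoint}$ update works by identifying the set $D_L(v)$ of colours that appear in exactly one neighbour's bounding list, where that list has size $2$ and is disjoint from all other neighbours' lists. These colours come naturally paired, which allows the coupling to ``cancel'' half of them, exactly as in Jerrum's path coupling. Your notion of ``ambiguously available'' colours is in the right spirit but too coarse: without the disjoint-pair structure (and the weight argument in \cref{eq:weights} showing $D_L(v)$ is large whenever $S_L(v)$ is large), the gain is not quantifiable and the $8/3$ threshold cannot be recovered. Relatedly, a crucial structural fact you omit is that Phases~1--3 deterministically drive every bounding set down to size at most $2$ before the random drift phase begins; the drift analysis (\cref{lem:drift}) relies on this.

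Finally, a minor but telling bookkeeping slip: in the paper, $T_1,T_2,T_3$ are not the running times of three phases of coalescence. They are, via \cref{lemma:CFTP}, the time to generate one update block $F$, the time to evaluate the predicate $\Phi(F)$, and the time to apply $F$ to a coloring, respectively. Your three-stage decomposition (preconditioning, bulk contraction, cleanup) does not correspond to these quantities.
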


\begin{remark}
The proof shows that for $\Delta$ sufficiently large, taking $C_{1.1} = 2$ is sufficient. Moreover, in \cref{sec:conclusion}, we briefly indicate how our sampler may be modified to obtain a version of \cref{thm:main} for $k \geq (8/3 - \epsilon)\Delta$, where $\epsilon \approx 10^{-2}$ is an absolute constant. We decided not to pursue this improvement since (i) the details are a bit more technical and all the main ideas are already present in our current analysis, (ii) the improvement is relatively minor, and can anyway not reach $k > 5\Delta/2$, which we believe is a natural barrier for our methods (see \cref{sec:conclusion} for a discussion of this). 
\end{remark}

\subsection{Organization} The rest of this paper is organized as follows. In \cref{sec:coupling}, we provide an introduction to coupling from the past, and the bounding chain method. In particular, the standard \cref{lemma:CFTP} reduces the proof of \cref{thm:main} to the construction of a certain procedure which we call $\tsc{SamplerUnit}$. In \cref{sec:outline}, we provide an overview of this procedure -- \cref{sub:notation} contains some notation used throughout the paper, \cref{sec:seeding-set} contains a preliminary routine used by our algorithm, whose proof is presented in \cref{sec:proof-find-seeding-set}, \cref{sec:bc} provides a quick introduction to the sampler in \cite{BC20}, \cref{sec:alg} provides a description of \tsc{SamplerUnit}, modulo the details of some primitive routines, and finally, \cref{sec:key-ideas} provides a high-level discussion of the key ideas underpinning the construction and analysis of our sampler. \cref{sec:routines} presents and analyses our main primitives -- $\tsc{compress}, \tsc{seeding}$, and $\tsc{disjoint}$, \cref{sec:analysis} completes the analysis of $\tsc{SamplerUpdate}$, and \cref{sec:conclusion} concludes with some final remarks (including a brief sketch of how to relax the lower bound in \cref{thm:main} to $(8/3 - \epsilon_0)\Delta$) and directions for future research. 
\section{Coupling from the past and bounding chains}\label{sec:coupling}
\subsection{Coupling from the past}
\label{sub:cftp}
As in \cite{Hub98,BC20}, our perfect sampler is based on coupling from the past (CFTP), which is a general procedure due to Propp and Wilson \cite{PW96} for sampling exactly from the stationary distribution of a Markov chain. The basic idea behind CFTP is that for an ergodic Markov chain started at time $-\infty$, its location at time $0$ should be distributed according to the stationary distribution; hence, if we could determine the location at time $0$ by only looking at the randomness generating the chain in the recent past, then we would have an efficient way of obtaining a sample from the stationary distribution of the chain. 

Implementing this idea algorithmically for an ergodic Markov chain on a finite state space $\Omega$ typically amounts to the following: for $i = 1,2,\dots, T$, we generate independent random maps $f_{-i}: \Omega \to \Omega$ with the property that if $\omega \in \Omega$ is distributed according to the stationary distribution, then $f_{-i}(\omega)$ is also distributed according to the stationary distribution. If it so happens that the composite function
\[F_{-1,-T} := f_{-1}\circ \dots \circ f_{-T}\]
is constant on $\Omega$, then we are guaranteed that $F_{-1,-T}(\omega_0)$ (for any $\omega_0 \in \Omega$; note that the image does not depend on the choice of $\omega_0$) is a sample from the stationary distribution. If $F_{-1,-T}$ is not constant, then we can consider $F_{-1,-T}\circ F_{-T-1, -2T}$ (by independently generating $f_{-T-1},\dots, f_{-2T}$), and so on. More formally, \cref{thm:main} follows from the following standard lemma, once we have constructed a suitable randomized algorithm $\tsc{SamplerUnit}$ and predicate $\Phi$. 
\begin{lemma}
\label{lemma:CFTP}
Let $G$ be an undirected graph on $n$ vertices with maximum degree $\Delta$, let $k \geq 8\Delta/3 + C_{\ref{thm:main}}\sqrt{\Delta\log\Delta}$, and let $\Omega$ denote the set of $k$-colorings of $G$. Suppose there is a randomized algorithm $\tsc{SamplerUnit}$ for generating a distribution $\mc{D}$ on functions $F: \Omega \to \Omega$, and a predicate $\Phi: \Supp(\mc{D}) \to \{\tsc{true}, \tsc{false}\}$ with the following properties:
\begin{enumerate}[(P1)]
    \item If $\chi$ is uniformly distributed in $\Omega$, and $F$ is generated according to $\mc{D}$ independently of $\chi$, then $F(\chi)$ is also uniformly distributed in $\Omega$. 
    \item If $\Phi(F) = \tsc{true}$, then $F$ is constant on $\Omega$.
    \item $\mb{P}_{F\sim \mc{D}}[\Phi(F) = \tsc{true}] \geq 1/2$.
    \item $\tsc{SamplerUnit}$ runs in time $T_1$, $\Phi(F)$ can be computed in time $T_2$, and $F(\chi)$ can be computed in time $T_3$. 
\end{enumerate}
Then, the randomized algorithm $\tsc{PerfectSampler}$ terminates in expected time $O(T_1 + T_2 + T_3)$ and returns a uniformly distributed $k$-coloring of $G$. 
\end{lemma}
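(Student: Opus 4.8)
The plan is to instantiate the standard Propp--Wilson coupling-from-the-past construction described informally just above the statement, and verify that properties (P1)--(P4) deliver both correctness and the expected-time bound. Concretely, the algorithm $\tsc{PerfectSampler}$ proceeds in rounds $j = 1, 2, \dots$: in round $j$ it calls $\tsc{SamplerUnit}$ to generate an independent map $F_{(j)} \sim \mc{D}$, forms the composition $F := F_{(1)} \circ \cdots \circ F_{(j)}$ (note the new map is composed on the \emph{right}, i.e.\ applied first, which is the essential feature of CFTP as opposed to forward simulation), and tests the predicate $\Phi$. The subtlety is that $\Phi$ is a predicate on a single $F_{(j)} \in \Supp(\mc D)$, so what we actually maintain is a single color $\chi_{(j)} := F_{(1)}\circ\cdots\circ F_{(j)}(\chi_0)$ for an arbitrary fixed start $\chi_0$, together with a flag that becomes \tsc{true} the first round $j^\ast$ in which $\Phi(F_{(j)}) = \tsc{true}$; once some $F_{(j)}$ is constant, every longer composition ending in it is constant, so $\chi_{(j^\ast)}$ no longer depends on $\chi_0$. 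Output $\chi_{(j^\ast)}$ and halt. (Operationally one reuses the maps already generated and only prepends fresh ones, which is why the running time is a geometric sum of single-round costs rather than growing with $j$; this is the standard CFTP bookkeeping.)

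\emph{Correctness.} Condition on the number of rounds being exactly $j^\ast$, equivalently on $\Phi(F_{(j)}) = \tsc{false}$ for $j < j^\ast$ and $\Phi(F_{(j^\ast)}) = \tsc{true}$. By (P2), $F_{(j^\ast)}$ is constant on $\Omega$, hence $F_{(1)}\circ\cdots\circ F_{(j^\ast)}$ is constant on $\Omega$ and its unique value equals $F_{(1)}\circ\cdots\circ F_{(j^\ast)}(\chi)$ for \emph{any} $\chi \in \Omega$; in particular it does not depend on $\chi_0$. Now let $\chi$ be uniform on $\Omega$, drawn independently of all the $F_{(j)}$'s. Applying (P1) $j^\ast$ times (each $F_{(j)}$ is independent of everything composed to its right, and of $\chi$), $F_{(1)}\circ\cdots\circ F_{(j^\ast)}(\chi)$ is uniform on $\Omega$. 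But this random variable is almost surely equal to the algorithm's output $\chi_{(j^\ast)}$, which is a deterministic function of the $F_{(j)}$'s alone. A random variable that is a.s.\ equal to a uniform one is itself uniform; moreover this holds conditionally on the event $\{j^\ast = j\}$ for every $j$ (the argument only used independence and (P1), (P2), both valid on that event), so it holds unconditionally. This also forces $j^\ast < \infty$ a.s., but that already follows from (P3).

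\emph{Running time.} By (P3) each round independently has probability at least $1/2$ of triggering $\Phi = \tsc{true}$, so $j^\ast$ is stochastically dominated by a geometric random variable with success probability $1/2$; hence $\mb{E}[j^\ast] \le 2$. Each round costs $O(T_1)$ to run $\tsc{SamplerUnit}$ and $O(T_2)$ to evaluate $\Phi$, by (P4). After stopping we evaluate the composition on $\chi_0$ at a total cost of $O(j^\ast \cdot T_3)$ (this is where maintaining $\chi_{(j)}$ incrementally matters — each new map is applied once). By Wald's identity (or simply linearity of expectation after conditioning on $j^\ast$, since the per-round costs $T_1, T_2, T_3$ are deterministic bounds), the total expected running time is $\mb{E}[j^\ast]\cdot O(T_1 + T_2 + T_3) = O(T_1 + T_2 + T_3)$. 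This completes the proof modulo the construction of $\tsc{SamplerUnit}$ and $\Phi$, which is the real content of the paper.

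I do not expect any genuine obstacle here — this is the textbook CFTP correctness argument of Propp and Wilson, and the only things to be careful about are (i) composing fresh maps on the right rather than the left, (ii) using (P2) to argue that the output is $\chi_0$-independent \emph{before} invoking (P1), and (iii) observing that $\Phi$ being a per-map predicate (not a predicate on the whole composition) is enough, since constancy of one factor forces constancy of the whole product. The heavy lifting — verifying (P1)--(P4) for the actual bounding-chain-based $\tsc{SamplerUnit}$, which is where the $k \ge (8/3 + o(1))\Delta$ threshold enters — is deferred to the later sections and is not part of this lemma.
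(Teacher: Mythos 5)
Your overall structure matches the paper's (textbook CFTP), and the running-time accounting is fine. But the correctness argument contains a genuine error: the claim that the output is uniform \emph{conditionally on $\{j^\ast = j\}$} is false in general, and the justification ``(P1), (P2), both valid on that event'' misreads (P1). Property (P2) is pointwise, so it survives conditioning; but (P1) is a statement about the \emph{unconditioned} law $\mc D$ — it says nothing about the law of $F$ conditioned on $\{\Phi(F)=\tsc{true}\}$ or $\{\Phi(F)=\tsc{false}\}$, which is exactly what conditioning on $\{j^\ast = j\}$ induces on each $F_{(i)}$. A minimal counterexample: take $\Omega = \{a,b,c\}$, let $\mc D$ put mass $1/3$ on each of the constant maps $F_a, F_b, F_c$, and set $\Phi(F_a)=\tsc{true}$, $\Phi(F_b)=\Phi(F_c)=\tsc{false}$. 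Then (P1) and (P2) hold, yet conditional on $\{j^\ast = 1\}$ the output is $a$ with probability $1$, not uniform. (The unconditional output \emph{is} uniform here — $P(\text{output}=a) = P(j^\ast = 1) = 1/3$, with the rest split evenly between $b$ and $c$ — which is exactly the point: only the unconditional statement is true.) Likewise, ``applying (P1) $j^\ast$ times'' is not a licensed move when $j^\ast$ is a random stopping time depending on the $F_{(i)}$'s; the statement $F_{(1)}\circ\cdots\circ F_{(j^\ast)}(\chi) \sim \mathrm{Unif}(\Omega)$ is in fact correct, but proving it requires an optional-stopping argument for the martingale of pushforward measures, not a naive iteration of (P1).

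The paper sidesteps all of this cleanly: for each \emph{fixed} $i$, $\chi_i := F_{-1}\circ\cdots\circ F_{-i}(\chi)$ is uniform by iterating (P1) a deterministic number of times; then $\chi^\ast = \chi_i$ on the event $\bigvee_{j\le i}\Phi(F_{-j})=\tsc{true}$, which by (P3) and independence has probability at least $1-2^{-i}$; hence by the coupling characterization of total variation distance, $\chi^\ast$ is within TV distance $2^{-i}$ of uniform for every $i$, so it is exactly uniform. You should replace the conditional argument with this unconditional TV-distance argument (or, if you prefer, a properly justified optional-stopping argument). The rest of your write-up — including the correct observations that constancy of one factor forces constancy of the composition, and that $\mb{E}[j^\ast]\le 2$ gives expected time $O(T_1+T_2+T_3)$ — is sound.
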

\begin{proof}
Let $F_{-1}, F_{-2}, \dots$ be the i.i.d.\ samples from $\mc{D}$ generated by $\tsc{PerfectSampler}$. Let $\chi$ be an independent and uniformly distributed $k$-coloring, let $\chi_{i} = F_{-1}\circ \dots \circ F_{-i}(\chi)$, and let $\chi^*$ be the output of the algorithm. By (P1), it follows that for all $i\geq 1$, $\chi_{i}$ is also a uniformly distributed $k$-coloring. Moreover, by (P2, P3), $\chi_{i} = \chi^*$ with probability at least $1-2^{-i}$ (since this happens whenever $\vee_{j=1}^{i}\Phi(F_{-i})=\tsc{true}$). In particular, by the coupling characterization of total variation distance, $\chi^*$ is within total variation distance $2^{-i}$ of the uniform distribution on the space of $k$-colorings. Finally, since $i \geq 1$ is arbitrary, it follows that $\chi^*$ is actually itself uniformly distributed. 
The claim about the running time follows easily by noting that the outer loop is executed at most $2$ times in expectation.
\end{proof}
\begin{algorithm}[ht]\label{alg:perfect-sampler}
\caption{$\tsc{PerfectSampler}$ -- Takes an input procedure $\tsc{SamplerUnit}$ and converts procedure into a perfect sampler.}
Compute seeding set $\mc{S}$ (as in \cref{prop:seeding})\\
\For{$i = 1,2,\ldots$}{
Generate $F_{-i}$ according to $\tsc{SamplerUnit}$\\
\If{$\Phi(F_{-1}\circ \dots \circ F_{-i})=\tsc{True}$}{Output unique coloring in the image of $F_{-1}\circ \dots \circ F_{-i}$ and \tsc{Terminate}}
}

\end{algorithm}

The main challenge in CFTP based algorithms is efficiently determining whether $\Phi(F) = \tsc{true}$. \emph{A priori}, this requires evaluating $F$ for every $\omega \in \Omega$, which is infeasible if $|\Omega|$ is very large. However, in certain contexts where the domain $\Omega$ is equipped with a natural partial order compatible with the Markov chain, considerations of monotonicity or anti-monotonicity can reduce this task to evaluating $F$ on only a small number of `extremal' elements (see, e.g., \cite{PW96,Hag98} for examples). Unfortunately, in our case, where $\Omega$ is the space of $k$-colorings, $|\Omega|$ is too large (potentially $k^{n}$) to permit direct evaluation of $F$, and moreover, there doesn't seem to be any natural notion of (anti)monotonicity compatible with various Markov chains on the space of colorings.   

\subsection{Bounding chains} To overcome this issue, Huber \cite{Hub98} and independently H\"aggstr\"om and Nelander \cite{Hag98} introduced the method of bounding chains. The way this method is implemented in the case of $k$-colorings is the following: while evaluating $\Phi(F)$, where $F$ is the composite function $F_{-1,-T}$ as in the previous subsection, instead of precisely keeping track of the intermediate images $f_{-j}\circ \dots \circ f_{-T}$, we instead maintain a set $L_{-j+1}(v)$ of colors for each vertex $v \in V(G)$ with the property that for all $j \in [T]$, the image of $\Omega$ under $f_{-j}\circ \dots \circ f_{-T}$ is contained in $L_{-j+1}(v_1)\times \dots \times L_{-j+1}(v_n)$. Then, if we can show that $|L_{0}(v)|=1$ for all $v\in V(G)$, we will be done. The idea here is that the (product) space of sets  of available colors at each vertex, while cruder, is more amenable to the design of CFTP algorithms (for instance, note that there is a natural partial order on this space induced by set-theoretic inclusion of the set of available colors at each vertex). 
The perfect samplers in \cite{Hub98,BC20} are both based on CFTP and the bounding chain method. Our improvement stems from a novel implementation of this method (see \cref{sec:key-ideas} for a discussion of the key ideas); in particular, among other things, we find a way of lifting Jerrum's analysis \cite{Jer95} of the rapid mixing of Glauber dynamics to bounding chains (\cref{sec:lifting-jerrum}). 

\section{Overview of \tsc{SamplerUnit}}
\label{sec:outline}
\subsection{Notation}
\label{sub:notation}
Throughout, $G$ will be an undirected graph on $n$ vertices with maximum degree $\Delta$. A \emph{bounding list} is a list $L = (L(v): v\in V(G))$, where each $L(v)$ is a subset of colors in $[k]$. We will often refer to $L(v)$ as the \emph{bounding set} of the vertex $v$. Given a vertex $v\in V(G)$, we let
\[S_L(v) = \bigcup_{w\in N(v)}L(w),\qquad Q_L(v) = \bigcup_{\substack{w\in N(v)\\|L(w)|=1}}L(w).\]
Here, as is standard, $N(v)$ denotes the neighborhood of a vertex $v$. A key quantity in our algorithm is the set \[N_L^\ast(v) = \{w\in N(v): |L(w)|=2\text{ and }L(w)\cap L(w') = \emptyset\text{ if }w'\in N(w),w'\neq w\},\]
and the set of \emph{disjoint-pair colors} associated to $v$, defined by
\[D_L(v) = \bigcup_{w\in N_L^\ast(v)}L(w).\]
Finally, let
\[E_L(v) = S_L(v) \setminus (Q_L(v)\cup D_L(v)).\]

We will reserve the symbols $\chi, \chi'$ for $k$-colorings, and say that $\chi$ is compatible with $L$, denoted by $\chi \sim L$, if $\chi(v) \in L(v)$ for all $v$. As in \cite{BC20}, we will associate update operations with tuples -- specifically, we will use $6$-tuples of the form
\[\alpha = (v,\tau, L, L', M, \gamma),\]
where $v \in V(G)$, $\tau \in [0,1]$, $L,L'$ are bounding lists, $M$ is a sequence of at most $\Delta + 1$ distinct colors from $[k]$, and $\gamma \in [3]$ specifies the `type' of the update. We will denote the update operation (i.e.\ the map from the space of proper colorings to itself) associated to the tuple $\alpha$ by $f_{\alpha}$; in particular, the sequence of random functions $f_{-1},\dots, f_{-T},\dots $ discussed in \cref{sub:cftp} will be specified by the sequence of random updates $\alpha_{-1},\dots, \alpha_{-T},\dots$. 

As in \cite{BC20}, $\tsc{SamplerUnit}$ will consist of a sequence of $T$ updates satisfying the following three key properties. Fix $t \in \{-T,\dots, -1\}$. First, the random vertex $v_{t}$ is independent of $\alpha_{-T},\dots,\alpha_{-t-1}$. Second, $f_{\alpha_{t}}$ implements the Glauber dynamics at $v_{t}$ i.e.\ for any coloring $\chi$, $f_{\alpha_{t}}(\chi)(w) = \chi(w)$ for all $w \neq v_{t}$ and $f_{\alpha_{t}}(\chi)(v_{t})$ is uniformly distributed in $[k]\setminus \chi(N(v))$. Third, if $\chi \sim L_{t}$, then $f_{\alpha_{t}}(\chi) \sim L_{t}'$.

\subsection{Finding a seeding set}
\label{sec:seeding-set}
The very first step of $\tsc{PerfectSampler}$ consists of efficiently finding a set $\mc{S}$ of \emph{seeded vertices}, as defined in the following proposition. In fact, if we need to generate multiple samples, we can perform this step only once at the start, and use the same $\mc{S}$ for all calls to  $\tsc{PerfectSampler}$.  
\begin{proposition}\label{prop:seeding}
Fix $\eta \in (0,1/3)$ and let $\Delta\ge C_\eta$. There is a set of vertices $S\subseteq V(G)$ such that any $v\in V(G)$ satisfies
\begin{align}
\label{eq:seeding-conclusion}
|N(v)\cap\mc{S}^c|\le (1-\eta)\Delta,\qquad |N(v)\cap\mc{S}|\le\Delta/3.
\end{align}
Furthermore, there is a randomized algorithm that finds such a set with probability at least $1/2$ and runs in time $O(n \Delta + n\log n)$.
\end{proposition}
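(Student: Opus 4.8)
The plan is to construct $\mathcal{S}$ by a random procedure and then fix up the rare failures with the algorithmic Lovász Local Lemma. First I would put each vertex $v$ into $\mathcal{S}$ independently with probability $p = 1/3 - \delta$ for a small $\delta = \delta(\eta)$ to be chosen. With this choice, for a fixed vertex $v$ we have $\mathbb{E}|N(v) \cap \mathcal{S}| \le p\Delta < \Delta/3$ and $\mathbb{E}|N(v) \cap \mathcal{S}^c| \le (1-p)\Delta = (2/3+\delta)\Delta < (1-\eta)\Delta$ (the last inequality holds since $\eta < 1/3$, taking $\delta$ small enough). Define the bad event $B_v$ to be the event that at least one of the two inequalities in \eqref{eq:seeding-conclusion} fails at $v$. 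By a Chernoff/Hoeffding bound, since each $|N(v)\cap\mathcal{S}|$ is a sum of at most $\Delta$ independent indicators and we are asking it to deviate from its mean by a constant fraction of $\Delta$, we get $\mathbb{P}[B_v] \le \exp(-c_\eta \Delta)$ for an explicit constant $c_\eta > 0$.

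Next I would set up the Lovász Local Lemma dependency structure: the event $B_v$ depends only on the coin flips of vertices in $N(v)$, hence $B_v$ is mutually independent of all $B_u$ with $N(u) \cap N(v) = \emptyset$, i.e.\ $B_v$ has at most $\Delta^2$ neighbors in the dependency graph. The symmetric LLL condition $e \cdot p' \cdot (d+1) \le 1$ with $p' = \exp(-c_\eta\Delta)$ and $d = \Delta^2$ is satisfied once $\Delta \ge C_\eta$, since $e \cdot \exp(-c_\eta \Delta)(\Delta^2+1) \to 0$. This proves that a valid set $\mathcal{S}$ exists. To make it algorithmic, I would invoke the Moser–Tardos resampling algorithm: start from the random assignment above, and while some $B_v$ holds, pick such a $v$ and resample the coin flips of all vertices in $N(v)$. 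By the Moser–Tardos theorem, this terminates in expected $O(n)$ resampling steps (more precisely $O(\sum_v \mathbb{P}[B_v]/(1-\mathbb{P}[B_v])) = O(n)$), and each resampling step touches $O(\Delta)$ vertices and requires re-checking the $O(\Delta^2)$ events in the affected neighborhoods, each checkable in $O(\Delta)$ time — but with a little care (maintaining the counts $|N(v)\cap\mathcal{S}|$ incrementally) each step is $O(\Delta)$ amortized, giving total expected time $O(n\Delta)$. The additive $O(n\log n)$ accounts for bookkeeping (e.g.\ maintaining a queue of currently-bad vertices). Finally, to boost the success probability to $1/2$ as stated, I would run the algorithm with a time budget that is a large constant times its expected running time and restart if exceeded; by Markov's inequality a constant number of attempts suffices, preserving the $O(n\Delta + n\log n)$ bound in expectation while succeeding with probability at least $1/2$.

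The main obstacle I anticipate is not the existence argument — which is a routine Chernoff-plus-LLL computation — but rather extracting the stated running time $O(n\Delta + n\log n)$ from Moser–Tardos. The naive implementation is $O(n\Delta^3)$ (expected $O(n)$ resamples, $\Delta$ vertices resampled each, $\Delta^2$ events rechecked, $\Delta$ work per event). Getting down to $O(n\Delta)$ requires observing that after resampling $N(v)$, the events whose status can change are exactly those $B_u$ with $N(u)\cap N(v)\neq\emptyset$, but to \emph{detect} which of these flipped it suffices to maintain, for every vertex $w$, the running counts $|N(w)\cap\mathcal{S}|$ and $|N(w)\cap\mathcal{S}^c|$; flipping the membership of a single vertex $x$ updates these counts for the $O(\Delta)$ vertices $w \in N(x)$ in $O(1)$ each, and whether $B_w$ holds is then a comparison of two integers. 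Since a resampling step changes the membership of $O(\Delta)$ vertices, it costs $O(\Delta^2)$ this way — so to truly reach $O(n\Delta)$ one should resample a \emph{single} vertex at a time rather than a whole neighborhood, or appeal to a parallel/cluster-expansion refinement of Moser–Tardos; I would note this and either present the $O(n\Delta^2)$ bound (still dominated by the $T_i$ in \cref{thm:main}) or carry out the single-vertex-resampling variant carefully. The rest of the proof is bookkeeping.
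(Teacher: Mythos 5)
Your existence argument and overall pipeline (random sampling with $p$ slightly below $1/3$, Chernoff, symmetric LLL, Moser--Tardos) matches the paper's proof exactly. The gap is in the running time: you write that the Moser--Tardos bound $\sum_v \mathbb{P}[B_v]/(1-\mathbb{P}[B_v])$ is ``$= O(n)$,'' and then spend the rest of the proof worrying about how to shave the per-resampling cost from $O(\Delta^2)$ down to $O(\Delta)$. But each $\mathbb{P}[B_v]$ is exponentially small in $\Delta$ --- your own Chernoff estimate gives $\mathbb{P}[B_v] \le \exp(-c_\eta\Delta)$ --- so the expected number of resamplings is not $\Theta(n)$ but $O(n\exp(-\Omega_\eta(\Delta)))$. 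Once you keep this factor, the per-resampling cost is essentially free: even a generous $O(\Delta^2\log n)$ per resampling (updating $O(\Delta^2)$ counters for the vertices at distance $\le 2$, plus heap operations to maintain the set of currently-bad vertices) contributes only $O\bigl(n\Delta^2(\log n)\exp(-\Omega_\eta(\Delta))\bigr)$, which is $O(n\log n)$ for $\Delta \ge C_\eta$. Adding the $O(n\Delta)$ initialization (computing all the counts $|N(v)\cap\mathcal{S}|$ once) and the $O(n\log n)$ heap setup gives the stated $O(n\Delta + n\log n)$.

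So the ``main obstacle'' you identify dissolves: there is no need for single-vertex resampling or a cluster-expansion refinement of Moser--Tardos, and your fallback of settling for $O(n\Delta^2)$ would be weaker than what the proposition claims. The correction is simply to not round $\exp(-\Omega_\eta(\Delta))$ up to a constant --- the smallness of the failure probabilities is precisely what makes the algorithmic LLL nearly free here. Your boosting-to-probability-$1/2$ step via Markov and restart is the same device the paper uses and is fine.
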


\begin{remark}
In fact, we can let $C_\eta$ be an absolute constant for $\eta = 1/3 - 2\sqrt{(\log\Delta)/\Delta}$.
\end{remark}

\cref{prop:seeding} is purely a statement about probabilistic combinatorics, and has nothing to do with graph colorings. Its proof is based on a standard application of the algorithmic Lov\'asz Local Lemma due to Moser and Tardos \cite{MT10}, and is included in \cref{sec:proof-find-seeding-set} for completeness.

\subsection{Outline of the Bhandari-Chakraborty construction}
\label{sec:bc}
Before presenting our construction of $\tsc{SamplerUnit}$, it is instructive to briefly review the salient features of the corresponding construction in \cite{BC20}; we refer the reader to \cite[Section~1.2]{BC20} for a more detailed overview. Recall that the goal of the block of $T$ updates in $\tsc{SamplerUnit}$ is to ensure that, with probability at least $1/2$, the bounding list at the most recent time consists of sets of size $1$. In \cite{BC20}, this is accomplished in two phases -- the first phase (referred to as `collapsing') serves to ensure that all bounding sets are of size at most $2$, whereas the second phase (referred to as `coalescing') makes all bounding sets of size $1$ with probability at least $1/2$. 

The two phases are themselves based on two types of updates, called $\tsc{compress}$ and $\tsc{contract}$. The coalescing phase consists of a predetermined number of applications of $\tsc{contract}$ at uniformly randomly chosen vertices. Whenever $\tsc{contract}$ is applied at a vertex $v$, it results in the bounding set at $v$ contracting to size at most $2$, and with some probability, to size $1$. However, to apply $\tsc{contract}$ at $v$, one needs the promise that $|S_{L}(v)| < k - \Delta$; since bounding sets of size $2$ (which is the only guarantee we have at the end of the collapsing phase) can in general lead to $|S_{L}(v)| = 2\Delta$, this is one source of the restriction $k > 3\Delta$. Also, while $\tsc{contract}$ leads to bounding sets of size at most $2$, it may very well happen that applying $\tsc{contract}$ to a vertex which already has bounding set of size $1$ leads to a larger bounding set of size $2$. During the coalescing phase, in order for the repeated random applications of $\tsc{contract}$ to lead to a `drift' towards all bounding sets having size $1$, one also needs the condition that $k > \Delta + \Delta\cdot {\max_{v\in V(G)}}|L(v)|$, which again leads to the restriction $k > 3\Delta$. 

In contrast to the coalescing phase, where the vertices are chosen uniformly at random, the vertices chosen to update in the collapsing phase are predetermined (this is one of the chief innovations of \cite{BC20}). Indeed, for an arbitrary ordering $v_1,\dots,v_{n}$ of the vertices, the collapsing phase can be concisely represented as \[\tsc{spruceup}(v_1), \tsc{contract}(v_1),\dots, \tsc{spruceup}(v_n), \tsc{contract}(v_n),\] where the job of $\tsc{spruceup}(v_i)$ is to ensure that the condition $|N_L(v_i)| < k - \Delta$, needed to apply $\tsc{contract}(v_i)$, is satisfied. Finally, $\tsc{spruceup}(v_i)$ is performed as follows: first, we pick an \emph{arbitrary} set $A$ of size $\Delta$ which non-trivially intersects the bounding sets of all neighbors of $v_i$ preceding $v_i$ (in the fixed ordering of vertices). Next, to each neighbor of $v_i$ succeeding it in the ordering, we apply $\tsc{compress}$ with input $A$ -- this has the effect of changing the bounding sets at these vertices to be the union of $A$ and a color outside of $A$. Note that once $\tsc{spruceup}(v_i)$ is completed, we indeed have $|N_L(v_i)| \leq 2\Delta$, since the vertices preceding $v_i$ have already been contracted (and hence, can contribute at most one color outside of $A$) whereas the vertices succeeding $v_i$ can also contribute at most one color outside of $A$ (by definition of $\tsc{compress}$).

\subsection{Our construction of \texorpdfstring{$\tsc{SamplerUnit}$}{SAMPLERUNIT}} 
\label{sec:alg}
We are now ready to present our construction, which is based on three kinds of updates -- $\tsc{compress}$ (\cref{alg:compress}), $\tsc{seeding}$ (\cref{alg:seeding}), and $\tsc{disjoint}$ (\cref{alg:disjoint}). Throughout this subsection, let $\eta = 1/3 - 2\sqrt{(\log \Delta)/\Delta}$, let $k > (3-\eta)\Delta$, and let $\mc{S}$ be the set of vertices coming from \cref{prop:seeding} applied with $\eta $. For the sake of simplicity, let $s = |\mc{S}|$. Also, throughout the rest of this paper, we assume that $\Delta \geq C$, for some sufficiently large absolute constant $C$. This may be done without loss of generality since, in \cref{thm:main}, the constant $C_{\ref{thm:main}}$ may be taken sufficiently large so that for $\Delta \leq C$, the lower bound on $k$ is $k > 3\Delta$, at which point one may use either the sampler in \cite{BC20}, or indeed our sampler with a slightly more careful analysis.\\

We construct $\tsc{SamplerUnit}$ in the following four phases:

\begin{itemize}
    \item Phase $1$ (Seeding step): Arbitrarily order the vertices in $\mc{S}$ as $v_1,\ldots, v_s$. For $1\le i\le s$, perform $\tsc{compress}$ on all neighbors of $v_i$ that are not in $\{v_1,\ldots,v_{i-1}\}$ with associated set $A$ being an arbitrary set of size $\Delta$ \emph{completely containing} $L(w)$ for each $w\in N(v_i)\cap \{v_j: j < i\}$. Then, perform $\tsc{seeding}$ on $v_i$ and increment $i$ by $1$ (if $i < s$) or move to Phase 2 (if $i=s$). Note that at the end of this phase, all vertices in $\mc{S}$ have bounding set of size at most $3$.
    \item Phase $2$ (Converting seeded vertices to size $2$): For each $1\le i\le s$, apply $\tsc{compress}$ to all neighbors of $v_i$ not in $\mc{S}$, with associated set $A$ being an arbitrary set of size $\Delta$ \emph{completely containing} $N(w)$ for all $w\in N(v_i)\cap\mc{S}$. Then, apply $\tsc{disjoint}$ to $v_i$. Note that at the end of this phase, all vertices in $\mc{S}$ have bounding set of size at most $2$. 
    \item Phase $3$ (Converting remaining vertices to size $2$): Mark all vertices in $\mc{S}$. Arbitrarily order the vertices in $V(G)\setminus\mc{S}$ as $v_{s+1},\dots, v_{n}$. For $s+1 \le i \le n$ perform the following sequence of operations. Apply $\tsc{compress}$ to all unmarked neighbors of $v_i$ with associated set $A$ of size $\Delta$ determined as follows: let $L_m$ be the current bounding list, restricted to marked neighbors of $v$. We greedily take elements from $Q_{L_m}(v)\cup E_{L_m}(v)$ first, then (if the set constructed at this point has size less than $\Delta$ colors) elements from $D_{L_m}(v)$ (chosen in pairs $L_m(w)$ for $w\in N_{L_m}^\ast(v)$), and then (if we still do not have $\Delta$ colors) arbitrarily from the remaining colors. Then, apply $\tsc{disjoint}$ to $v_i$, mark $v_i$, and increment $i$ by $1$ (if $i < n$) or move to Phase 4 (if $i = n$). Note that at the end of this phase, all vertices have bounding set of size at most $2$. 
    \item Phase $4$ (Drifting to size $1$): For $T_{D} = 2(k-\Delta)n\log{n}/(k-5\Delta/2)$ steps, apply $\tsc{disjoint}$ on a uniformly random vertex in the graph.
\end{itemize}

In \cref{sec:analysis}, we show how $\tsc{SamplerUnit}$ can be used to generate a distribution $\mc{D}$ and a predicate $\Phi$ satisfying (P1)-(P4) in \cref{lemma:CFTP} with $T_1, T_2, T_3$ as in \cref{thm:main}.

\subsection{Key ideas}
\label{sec:key-ideas}
In this subsection, we provide an informal and high-level discussion of some of the key ideas underlying our construction and analysis of $\tsc{SamplerUnit}$.  
\subsubsection{Lifting Jerrum's analysis of Glauber dynamics using \texorpdfstring{$D_L(v)$}{DL(v)}}
\label{sec:lifting-jerrum}
The main idea which enables us to bypass the obstacle at $3\Delta$ encountered in the coalescing phase is a bounding list version of Jerrum's analysis in \cite{Jer95}. Recall that in the standard (path) coupling argument proving rapid mixing of Glauber dynamics for $k > 3\Delta$, one couples two chains by generating a uniformly random pair $(v,c)\in V(G)\times [k]$, and updating the color at $v$ to $c$ whenever possible. Suppose we have two colorings $\chi$ and $\chi'$ differing only at a single vertex $v_0$, with $\chi(v_0) = c_0$ and $\chi'(v_0) = c_0'$. Then, under this coupling, the distance between $\chi$ and $\chi'$ decreases by $1$ iff for the random pair $(v,c)$, $v = v_0$ and $c$ is one of the at most $k-\Delta$ colors not appearing in $|\chi(N(v_0))|$ (note that $\chi(N(v_0)) = \chi'(N(v_0))$. Also, the distance between $\chi$ and $\chi'$ increases by $1$ iff for the random pair $(v,c)$, $v \in N(v_0)$ and $c \in \{c_0, c_0'\}$. Since there are at least $k-\Delta$ pairs which decrease the distance by $1$, and at most $2\Delta$ pairs which increase the distance by $1$, we see (at least intuitively) that the distance drifts towards $0$ if $k - \Delta > 2\Delta$ i.e.\ $k > 3\Delta$. Jerrum improved the lower bound to $2\Delta$ by slightly modifying this coupling so that whenever $c_0$ (respectively $c_0'$) is selected by for $\chi$, $c_0'$ (respectively $c_0$) is selected for $\chi'$; it is immediate that this halves the number of `bad' pairs $(v,c)$, and leads to the weaker restriction $k - \Delta > \Delta$ i.e.\ $k > 2\Delta$. 

In our algorithm, we perform a similar coupling of the colors in $D_L(v)$ (which are naturally paired up by definition) -- this appears as one of the cases in the update $\tsc{disjoint}$. However, in order to obtain any improvement via such a coupling, we need $|D_L(v)|$ to constitute a non-trivial fraction of $|N_L(v)|$, which need not be the case (note that no such difficulty arises in Jerrum's work). We overcome this issue using a win-win analysis based on a robust version (see \cref{eq:weights}) of the extremal combinatorial fact that if $|S_L(v)| > 3\Delta/2,$ then $|D_L(v)| \neq 0$.\\ 

While this idea takes care of the barrier at $3\Delta$ \emph{provided} the bounding list consists of sets of size at most $2$, getting to this stage presents a different obstacle owing to the fact that the update $\tsc{contract}$ in \cite{BC20} requires $k > 3\Delta$ in the worst case to satisfy its promise. We circumvent this issue by using a combination of several ideas. 

\subsubsection{Preconditioning via \tsc{Seeding}} In contrast to \cite{BC20, Hub98}, we make much greater use of the structure of the underlying graph $G$ by first identifying a set $\mc{S}$ of size $\approx n/3$ such that each vertex has no more than $\approx 2\Delta/3$ neighbors outside $\mc{S}$ and no more than $\Delta/3$ neighbors inside $S$. Phase 1 ensures that all vertices in $\mc{S}$ have bounding sets of size at most $3$ -- in order to accomplish this, we introduce a new update called $\tsc{seeding}$ which requires a weaker promise than $\tsc{contract}$, but comes at the cost of the bounding set being of size at most $3$ (as opposed to $2$). Specifically, $\tsc{seeding}$ requires the guarantee that $k - \Delta \geq |S_L(v)|^{2}/(\Delta + |S_L(v)|)$; when $|S_L(v)| \leq 2\Delta$ (as can be guaranteed by applying $\tsc{compress}$ updates as in \cite{BC20}), the right hand side is at most $4\Delta/3$, so that the restriction on $k$ is only $k > 7\Delta/3$. 

\subsubsection{Substantially exploiting the flexibility in the choice of \texorpdfstring{$A$}{A}}
In \cite{BC20}, the set $A$ used for $\tsc{compress}$ updates to `spruce-up' the neighborhood of $v$ is always chosen to be simply a set of size $\Delta$ intersecting the bounding set of each neighbor of $v$ preceding it in the order, and has no additional properties. In contrast, our construction of $A$ is much more careful, and in fact, varies across phases to account for the different nature of the challenges encountered. In particular, when `sprucing-up' a vertex $v \in \mc{S}$ in Phase 2, we take $A$ to be a set of size $\Delta$ containing \emph{all} the colors appearing in any bounding set of $N(v) \cap \mc{S}$ -- by Phase 1 and the definition of $\mc{S}$, this is always possible. Then, note that applying $\tsc{compress}$ to the at most $\approx 2\Delta/3$ neighbors of $v$ not in $\mc{S}$ can contribute at most one additional color each, so that after this sprucing-up procedure, $|S_L(v)| \leq \Delta + \approx 2\Delta/3 = \approx 5\Delta/3$. At this point, we could use the $\tsc{contract}$ update from \cite{BC20} to convert the bounding set of $v$ to size at most $2$, but for a streamlined treatment, we use our more refined $\tsc{disjoint}$ update.  
\subsubsection{Refining \texorpdfstring{$\tsc{contract}$}{CONTRACT} by tracking \texorpdfstring{$D_L(v)$}{DL(v)}}Phase 3 of our algorithm, whose analysis is the most involved, combines the previous idea of exploiting the flexibility in the choice of $A$ with a variation of $\tsc{contract}$, called $\tsc{disjoint}$, which implements the idea in \cref{sec:lifting-jerrum}. Notably, as compared to $\tsc{contract}$, which requires the promise $k - \Delta > |S_L(v)|$, $\tsc{disjoint}$ requires the more refined promise
\[|S_L(v)| - |Q_L(v)| < (k-\Delta)\left(\frac{k-|Q_L(v)|}{k-|Q_L(v)| - |D_L(v)|/2}\right);\]
note that when $|Q_L(v)| = 0 = |D_L(v)|$, this reduces to the promise required by $\tsc{contract}$. Once the desired properties of the $\tsc{disjoint}$ update have been established (\cref{lem:disjoint}), the analysis of Phase 3 boils down to checking that the promise required by $\tsc{disjoint}$ is always satisfied; we show that by using a more intricate procedure for selecting the set $A$, this refined promise can be satisfied with around $8\Delta/3$ colors.





\section{\texorpdfstring{$\tsc{compress}$}{COMPRESS}, \texorpdfstring{$\tsc{seeding}$}{SEEDING}, and \texorpdfstring{$\tsc{disjoint}$}{DISJOINT}}\label{sec:routines}
In this section, we present and analyse our three main updates -- \tsc{compress}, \tsc{seeding}, and \tsc{disjoint}. 
In each case, we explain how the update is generated, how it interacts with the bounding list, and how one can apply the resulting random function to colorings (i.e.\ `decode the update') in order to simulate the Glauber dynamics at the appropriate vertex. 

\subsection{\texorpdfstring{$\tsc{compress}$}{COMPRESS}}\label{sub:compress}
The first update, $\tsc{compress}$, is exactly the same as in \cite{BC20}, which in turn builds on ideas in \cite{Hub98}; we sketch the analysis, as it serves as a warm-up for the analysis of our other updates. We define \tsc{Compress} in \cref{alg:compress}, and summarize its important properties in the following lemma. 
\begin{algorithm}[ht]\label{alg:compress}
\caption{$\tsc{compress}$ -- Takes an input update $\alpha_\init = (v_\init,\tau_\init,L_\init,L_\init',M_\init,\gamma_\init)$, a vertex $v$, and a set $A$ of size $\Delta$, and outputs a compatible ``compressed update''.}
\SetKwInOut{KwIn}{Input}
\SetKwInOut{KwOut}{Output}
\SetKwProg{myproc}{Function}{}{}
\myproc{\tsc{compress.gen}\rm{:}}{
 
\KwIn{$\alpha_\init=(v_\init,\tau_\init,L_\init,L'_\init, M_\init,\gamma_\init)$, $v\in V(G)$ and $A\subseteq [k]$ with $|A|=\Delta$}
\KwOut{$\alpha_\out=(v_\out,\tau_\out,L_\out,L'_\out,M_\out,\gamma_\out)$}
$\gamma_\out \assign 1$; $v_\out \assign v$; $L_\out \assign L_\init'$; $L_\out' \assign L_\out$; $\tau_\out \assign \tsc{unif}[0,1]$;\\
$c_1 \assign \tsc{unif}([k]\setminus A)$; $L_\out'(v) \assign A\cup\{c_1\}$; \\
$M_\out \assign \tsc{UnifPermutation}(A)$; $M_\out \assign (M_\out, c_1)$;\\
$\alpha_\out \assign (v_\out,\tau_\out,L_\out,L_\out',M_\out,\gamma_\out)$;
}
\myproc{\tsc{compress.decode}\rm{:}}{
 
\KwIn{$\alpha=(v,\tau,L,L', M,\gamma)$ with $\gamma = 1$ and a coloring $\chi\sim L$}
\KwOut{$\chi'\sim L'$}
$\chi' \assign \chi$;\\
$p_\chi(v) \assign \frac{k-\Delta}{k-|\chi(N(v))|}$;\\
$c_1 \assign M[\Delta+1]$\Comment{Since $\gamma = 1$, $M$ has length $\Delta+1$}\\
\If{$c_1\notin\chi(N(v))$\emph{ and }$\tau \le p_\chi(v)$}{
$\chi'(v) \assign c_1$;
} \Else {
$M' \assign M[1,\Delta]\setminus\chi(N(v))$; \\
$\chi'(v) \assign M'[1]$;\Comment{Exists if $c_1\in\chi(N(v))$}
}

}
\end{algorithm}

\begin{lemma}[{\cite[Lemma~2.1]{BC20}}]\label{lem:compress}
Let $k\ge\Delta + 1$. Let $\alpha_\init = (v_\init,\tau_\init,L_\init,L_\init',M_\init,\gamma_\init)$, and choose $v\in V(G)$ and $A\subseteq [k]$ with $|A| = \Delta$. Let $\alpha_\out = (v_\out,\tau_\out,L_\out,L_\out',M_\out',\gamma_\out')$ be the output of $\tsc{compress.gen}[\alpha_\init,v,A]$. Let $\chi$ be a coloring, and  let $\chi' = \tsc{compress.decode}[\alpha_\out,\chi]$. Then:
\begin{enumerate}[(C1)]
    \item $L_\out = L_\init'$, $L_\out'(u) = L_\out(u)$ for $u\neq v$, and $L_\out'(v) = A\cup\{c_1\}$ for some $c_1\in[k]\setminus A$.
    \item If $\chi\sim L_\out$, then $\chi'\sim L_\out'$.
    \item For $\chi\sim L_\out$, the random variable $\chi'$ is uniformly distributed over the set of colorings satisfying $\chi'(w) = \chi(w)$ for $w\neq v$ (i.e., this follows Glauber dynamics).
    \item Other than copying $L_\init'$, the expected runtime of $\tsc{compress.gen}$ is $O(\Delta\log k + \log n)$. The runtime of $\tsc{compress.decode}$ is $O(\Delta(\log\Delta\log k + \log n))$.
\end{enumerate}
\end{lemma}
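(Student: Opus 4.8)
Since this statement is \cite[Lemma~2.1]{BC20}, the plan is to recall its short proof, which also serves to fix the decoding conventions reused for $\tsc{seeding}$ and $\tsc{disjoint}$. Property (C1) will be read directly off $\tsc{compress.gen}$: the assignments $L_\out\assign L_\init'$, then $L_\out'\assign L_\out$, then the single overwrite $L_\out'(v)\assign A\cup\{c_1\}$ with $c_1\in[k]\setminus A$, give the three claims. For (C2), since $\tsc{compress.decode}$ changes only the color of $v$ and we assume $\chi\sim L_\out$, it suffices to check $\chi'(v)\in L_\out'(v)=A\cup\{c_1\}$: this is clear in the ``if'' branch, where $\chi'(v)=c_1$, while in the ``else'' branch, where $\chi'(v)=M'[1]$ with $M'=M[1,\Delta]\setminus\chi(N(v))$ and $M[1,\Delta]$ a permutation of $A$, the only thing to verify is $M'\neq\emptyset$, i.e.\ $A\not\subseteq\chi(N(v))$. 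I would dispose of this by a two-case pigeonhole on \emph{why} the ``else'' branch was entered: if $c_1\in\chi(N(v))$, then $c_1\notin A$ forces $|A\cap\chi(N(v))|\le|\chi(N(v))|-1\le\Delta-1<|A|$; if instead $\tau>p_\chi(v)$, then $p_\chi(v)<1$ forces $|\chi(N(v))|<\Delta=|A|$. Either way $A\setminus\chi(N(v))\neq\emptyset$; I would also record that $M'[1]\notin\chi(N(v))$, for use in (C3).

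The substance is (C3). Fix $\chi\sim L_\out$; the randomness consists of independent $c_1\sim\tsc{unif}([k]\setminus A)$, $\tau\sim\tsc{unif}[0,1]$, and $\sigma$ a uniform permutation of $A$, with $M_\out=(\sigma,c_1)$. By the end of the (C2) argument $\chi'(v)$ always lands in $[k]\setminus\chi(N(v))$, a set of size $k-|\chi(N(v))|$, so it is enough to show each such color $c$ is hit with probability exactly $1/(k-|\chi(N(v))|)$; the probabilities then sum to $1$, pinning down $\chi'$ as a Glauber update at $v$. I would split on whether $c\in A$. If $c\notin A$, then $c$ can arise only in the ``if'' branch, only when $c_1=c$ (and then $c\notin\chi(N(v))$ holds automatically), so its probability is $\mb{P}[c_1=c]\cdot\mb{P}[\tau\le p_\chi(v)]=\tfrac{1}{k-\Delta}\cdot p_\chi(v)=\tfrac{1}{k-|\chi(N(v))|}$.

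If instead $c\in A\setminus\chi(N(v))$, then $c$ can arise only in the ``else'' branch, as $M'[1]$; since $\sigma$ is a uniform permutation independent of $(\tau,c_1)$, conditionally on being in the ``else'' branch each element of $A\setminus\chi(N(v))$ is equally likely to come first in $\sigma$, so $\mb{P}[\chi'(v)=c]=\mb{P}[\text{``else''}]/|A\setminus\chi(N(v))|$. Writing $a:=|A\cap\chi(N(v))|$ and $d:=|\chi(N(v))|$, a short computation gives $\mb{P}[c_1\notin\chi(N(v))]=\tfrac{k-\Delta-(d-a)}{k-\Delta}$, hence $\mb{P}[\text{``else''}]=1-\tfrac{k-\Delta-(d-a)}{k-\Delta}\cdot\tfrac{k-\Delta}{k-d}=\tfrac{\Delta-a}{k-d}$, while $|A\setminus\chi(N(v))|=\Delta-a$, so the ratio is $\tfrac{1}{k-d}$ as required. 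The one mildly delicate point---the ``main obstacle'', insofar as there is one---is noticing that the dependence on the overlap $a$ between $A$ and $\chi(N(v))$ cancels between $\mb{P}[\text{``else''}]$ and the normalization $|A\setminus\chi(N(v))|$; everything else is arithmetic.

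Finally, (C4) is a routine cost count. In $\tsc{compress.gen}$, apart from copying $L_\init'$, the dominant operations are generating the uniform permutation $\sigma$ of $A$ (Fisher--Yates, $O(\Delta)$ swaps of $O(\log k)$-bit colors) and sampling $c_1$ from $[k]\setminus A$, with $\tau$ drawn to the $O(\log k)$ bits of precision ever inspected; accounting for $\le\Delta$ colors of $O(\log k)$ bits together with $O(\log n)$-cost global lookups gives $O(\Delta\log k+\log n)$ in expectation. In $\tsc{compress.decode}$, the dominant operation is assembling $\chi(N(v))$ as a balanced search tree over at most $\Delta$ colors, in $O(\Delta(\log\Delta\log k+\log n))$ time, after which evaluating $p_\chi(v)$, testing $c_1$, and scanning $M[1,\Delta]$ for its first entry outside $\chi(N(v))$ all fit within the same budget, yielding the stated $O(\Delta(\log\Delta\log k+\log n))$.
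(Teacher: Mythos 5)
Your proposal is correct and follows essentially the same route as the paper's proof sketch: read (C1)--(C2) off the algorithm, verify that the ``else'' branch is well-defined (you use a direct pigeonhole split on why it was entered; the paper argues that $A\setminus\chi(N(v))=\emptyset$ forces the ``if'' branch), compute the exact hitting probability $1/(k-|\chi(N(v))|)$ for colors outside $A$, and invoke the symmetry of the uniform permutation $\sigma$ to distribute the remaining probability evenly over $A\setminus\chi(N(v))$. Your explicit cancellation of the overlap $a=|A\cap\chi(N(v))|$ between $\mb{P}[\text{else}]=(\Delta-a)/(k-d)$ and the normalization $|A\setminus\chi(N(v))|=\Delta-a$ is the same computation the paper compresses into ``by symmetry, the remaining probability is easily seen to be split equally,'' just written out in full.
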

\begin{proof}[Proof Sketch]
The first and second items follow trivially by construction, and the final item can also be justified easily (see \cite[Lemma~2.1]{BC20} for details); the technical heart of the above lemma is the third item, whose proof we now sketch.

Note that the randomness in $\chi'$ comes entirely from the $\tsc{compress.gen}$ routine. Consider some $\chi\sim L_\out = L_\init'$. $\tsc{compress.gen}$ chooses $c_1\in[k]\setminus A$ uniformly. $\tsc{compress.decode}$ changes only the color of $\chi$ at $v$, in the following way: if $c_1\notin\chi(N(v))$, then we let $\chi'(v) = c_1$ with probability $p_\chi(v)$. In all other cases, we let $\chi'(v)$ be a uniform color in $A\setminus\chi(N(v))$. Note that this set is empty only when $\chi(N(v)) = A$, which implies that $p_\chi(v) = (k-\Delta)/(k-|\chi(N(v))|) = 1$ and $c_1\notin\chi(N(v))$, i.e., that the first case is always invoked. Hence the decoding algorithm is well-defined.

Finally, we check that the color $\chi'(v)$ is chosen with the correct probability. For this, note that we choose any fixed element $c\in [k]\setminus(A\cup\chi(N(v)))$ if $c_1 = c$ and $\tau \le p_\chi(v)$, which happens with probability
\[\frac{1}{k-\Delta}\cdot\frac{k-\Delta}{k-|\chi(N(v))|} = \frac{1}{k-|\chi(N(v))|},\]
which is the correct probability according to the Glauber dynamics. By symmetry, the remaining probability is easily seen to be split equally among $A\setminus\chi(N(v))$, hence the  probability distribution of $\chi'(v)$ is indeed uniform on $[k]\setminus\chi(N(v))$.
\end{proof}

\subsection{\texorpdfstring{$\tsc{seeding}$}{SEEDING}}\label{sub:seeding}
The second update, $\tsc{seeding}$, is a variant of $\tsc{contract}$ in \cite{BC20}, and has the crucial property of operating under a weaker guarantee than $|S_L(v)|\le k-\Delta$. The tradeoff in exchange for this weaker guarantee is that the bounding set is no longer guaranteed to be of size $2$ but will instead be of size at most $3$.

\begin{algorithm}[ht]\label{alg:seeding}
\caption{$\tsc{seeding}$ -- Takes an input update $\alpha_\init = (v_\init,\tau_\init,L_\init,L_\init',M_\init,\gamma_\init)$ and a vertex $v$ with $\frac{|S_L(v)|^2}{\Delta + |S_L(v)|}\le k - \Delta$, and outputs a compatible ``seeding update''.}
\SetKwInOut{KwIn}{Input}
\SetKwInOut{KwOut}{Output}
\SetKwProg{myproc}{Function}{}{}
\myproc{\tsc{seeding.gen}\rm{:}}{
 
\KwIn{$\alpha_\init=(v_\init,\tau_\init,L_\init,L'_\init, M_\init,\gamma_\init)$, $v\in V(G)$ with $\frac{|S_L(v)|^2}{\Delta + |S_L(v)|}\le k - \Delta$}
\KwOut{$\alpha_\out=(v_\out,\tau_\out,L_\out,L'_\out,M_\out,\gamma_\out)$}
$\gamma_\out \assign 2$; $v_\out \assign v$; $L_\out \assign L_\init'$; $L_\out' \assign L_\out$; $\tau_\out \assign \tsc{unif}[0,1]$; \\
$c_1 \assign \tsc{unif}([k]\setminus S_L(v))$; $c_2,c_3 \assign \tsc{unif}(S_L(v))$; $L_\out'(v) \assign \{c_1,c_2,c_3\}$;\\
\Comment{$c_2,c_3$ chosen with repetition}\\
$M_\out \assign (c_1,c_2,c_3)$;\\
$\alpha_\out \assign (v_\out,\tau_\out,L_\out,L_\out',M_\out,\gamma_\out)$;
}
\myproc{\tsc{seeding.decode}\rm{:}}{
 
\KwIn{$\alpha=(v,\tau,L,L', M,\gamma)$ with $\gamma = 2$, $\frac{|S_L(v)|^2}{\Delta + |S_L(v)|}\le k - \Delta$, $M[1]\notin S_L(v)$, $M[2],M[3]\in S_L(v)$, and a coloring $\chi\sim L$}
\KwOut{$\chi'\sim L'$}
$\chi' \assign \chi$;\\
$p_\chi(v) \assign \frac{|S_L(v)|^2}{(k-|\chi(N(v))|)(|\chi(N(v))|+|S_L(v)|)}$;\\
$(c_1,c_2,c_3) \assign M[1,3]$\\
\If{$\{c_2,c_3\}\subseteq\chi(N(v))$\emph{ or }$\tau > p_\chi(v)$}{
$\chi'(v) \assign c_1$;
} \ElseIf{$c_2\notin\chi(N(v))$}{
$\chi'(v) \assign c_2$;
} \Else {
$\chi'(v) \assign c_3$;
}

}
\end{algorithm}

\begin{lemma}\label{lem:seeding}
Let $\alpha_\init = (v_\init,\tau_\init,L_\init,L_\init',M_\init,\gamma_\init)$, and choose $v\in V(G)$ such that $\frac{|S_L(v)|^2}{\Delta + |S_L(v)|}\le k - \Delta$. Let $\alpha_\out = (v_\out,\tau_\out,L_\out,L_\out',M_\out',\gamma_\out')$ be the output of $\tsc{seeding.gen}[\alpha_\init,v]$. Let $\chi$ be a coloring, and let $\chi' = \tsc{seeding.decode}[\alpha_\out,\chi]$. Then:
\begin{enumerate}[(S1)]
    \item $L_\out = L_\init'$, $L_\out'(u) = L_\out(u)$ for $u\neq v$, and $|L_\out'(v)| \le 3$.
    \item If $\chi\sim L_\out$, then $\chi'\sim L_\out'$.
    \item For $\chi\sim L_\out$, the random variable $\chi'$ is uniformly distributed over the set of colorings satisfying $\chi'(w) = \chi(w)$ for $w\neq v$ (i.e., this follows Glauber dynamics).
    \item Other than copying $L_\init'$, the expected runtime of $\tsc{seeding.gen}$ is $O(\Delta(\log k + \log n))$. The runtime of $\tsc{seeding.decode}$ is $O(\Delta(\log k + \log n))$.
\end{enumerate}
\end{lemma}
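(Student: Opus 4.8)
The plan is to mirror the structure of the proof sketch for \cref{lem:compress}, since $\tsc{seeding}$ is a close variant of $\tsc{compress}$ (and of $\tsc{contract}$ from \cite{BC20}). Items (S1) and (S2) are immediate from the construction: $L_\out$ is set to $L_\init'$ by definition, $L_\out'$ agrees with $L_\out$ away from $v$, and $L_\out'(v) = \{c_1, c_2, c_3\}$ has size at most $3$ (it could be smaller since $c_2, c_3$ are drawn with repetition); for (S2) one checks case by case in $\tsc{seeding.decode}$ that $\chi'(v)$ is always assigned one of $c_1, c_2, c_3$, hence $\chi' \sim L_\out'$. Item (S4) is a routine bookkeeping exercise: generating $c_1$ requires sampling a uniform element of $[k]\setminus S_L(v)$, which after computing $S_L(v)$ (touching the $\le \Delta$ neighbors of $v$, each with a bounding set of size that we may assume is $O(1)$ after the relevant phase, or in general bounded appropriately) costs $O(\Delta(\log k + \log n))$; sampling $c_2, c_3$ from $S_L(v)$ and reading off $\chi(N(v))$ in the decode step are of the same order.

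The technical heart, as in \cref{lem:compress}, is item (S3): verifying that $\chi'(v)$ is uniformly distributed on $[k]\setminus \chi(N(v))$. I would argue by computing, for each color $c \in [k]\setminus \chi(N(v))$, the probability that $\tsc{seeding.decode}$ outputs $\chi'(v) = c$, using that all randomness lies in the choice of $(c_1, c_2, c_3, \tau)$ in $\tsc{seeding.gen}$. There are two kinds of target colors. First, $c \in [k] \setminus S_L(v)$: such a $c$ can only be produced via the $c_1$-branch, i.e.\ when $c_1 = c$ (probability $1/(k - |S_L(v)|)$) and either $\{c_2,c_3\}\subseteq \chi(N(v))$ or $\tau > p_\chi(v)$. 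Here one must be slightly careful: the event $\{c_2,c_3\}\subseteq \chi(N(v))$ has probability $(|\chi(N(v))|/|S_L(v)|)^2$ (note $\chi(N(v)) \subseteq S_L(v)$ for $\chi \sim L$), and the $\tau$-event is independent with probability $1 - p_\chi(v)$; a short computation should show the total probability equals $1/(k - |\chi(N(v))|)$, which is exactly where the formula $p_\chi(v) = |S_L(v)|^2 / \big((k - |\chi(N(v))|)(|\chi(N(v))| + |S_L(v)|)\big)$ has been reverse-engineered to make things work, and the promise $|S_L(v)|^2/(\Delta + |S_L(v)|) \le k - \Delta$ is what guarantees $p_\chi(v) \le 1$ (using $|\chi(N(v))| \le \Delta$ and monotonicity of the relevant expression in $|\chi(N(v))|$). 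Second, $c \in S_L(v)\setminus \chi(N(v))$: such a $c$ can be produced either as $c_1$ — impossible, since $c_1 \notin S_L(v)$ — or in the $c_2$/$c_3$ branches, i.e.\ when $\tau \le p_\chi(v)$ and ($c_2 = c$, or $c_2 \in \chi(N(v))$ and $c_3 = c$). Summing these contributions over the choice of $(c_2,c_3)$ should again give $1/(k-|\chi(N(v))|)$, independent of $c$; the well-definedness check (that we never land in an empty branch) is the analogue of the $\chi(N(v)) = A$ degenerate case in \cref{lem:compress} and should be handled by noting that if $\{c_2,c_3\}\not\subseteq\chi(N(v))$ then at least one of $c_2 \notin \chi(N(v))$ or $c_3 \notin \chi(N(v))$ holds.

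I expect the main obstacle to be purely algebraic: checking that the two probability computations above both collapse to $1/(k - |\chi(N(v))|)$, and in particular confirming that $0 \le p_\chi(v) \le 1$ under the stated promise for all valid values of $|\chi(N(v))| \in \{0, 1, \dots, \Delta\}$. Writing $q = |\chi(N(v))|$ and $s = |S_L(v)|$, one has $p_\chi(v) = s^2/\big((k-q)(q+s)\big)$, and since the denominator $(k-q)(q+s)$ is easily seen to be minimized over $q \in [0,\Delta]$ at $q = \Delta$ (as $k \ge 3\Delta/2$-ish makes $k - q$ decreasing faster than $q + s$ increasing in the relevant range — this needs a line of justification), the worst case is $p_\chi(v) = s^2/\big((k-\Delta)(\Delta+s)\big) \le 1$, which is exactly the promise. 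Everything else is direct substitution, and once (S3) is in hand the lemma follows.
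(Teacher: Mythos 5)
Your proposal is essentially the paper's proof: both reduce the content to verifying $p_\chi(v)\in[0,1]$ under the promise, and then checking that the decode produces each color with probability $1/(k-|\chi(N(v))|)$. Two remarks, one substantive and one cosmetic.

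The substantive point is your argument that $p_\chi(v)\le 1$. You write $p_\chi(v)=s^2/\bigl((k-q)(q+s)\bigr)$ with $q=|\chi(N(v))|$, $s=|S_L(v)|$, and assert that the denominator is ``minimized over $q\in[0,\Delta]$ at $q=\Delta$'' because $k-q$ decreases faster than $q+s$ increases. This is not correct as stated: $\frac{d}{dq}\bigl[(k-q)(q+s)\bigr]=k-s-2q$, which is \emph{positive} for $q<(k-s)/2$, so the denominator first increases and only decreases after that. In other words it is a downward parabola in $q$, hence concave, and its minimum on $[0,\Delta]$ is at one of the two endpoints---but which one depends on the sign of $s-(k-\Delta)$, and both cases genuinely occur in the regime where seeding is invoked (Phase 1 only guarantees $s\le 2\Delta$). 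The fix is exactly what the paper does: observe concavity, then check both endpoints. At $q=0$ one gets $p_\chi(v)=s/k\le 1$ trivially; at $q=\Delta$ one gets $s^2/\bigl((k-\Delta)(\Delta+s)\bigr)\le 1$, which is precisely the stated promise. Your outline would leave a reader unable to justify $p_\chi(v)\le 1$ for the $q$'s below $(k-s)/2$ without noticing this.

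The cosmetic point: for the uniformity check, you propose to directly compute the probability that $\chi'(v)=c$ for $c\in[k]\setminus S_L(v)$, which requires an inclusion--exclusion on the union event $\{\{c_2,c_3\}\subseteq\chi(N(v))\}\cup\{\tau>p_\chi(v)\}$; this does collapse to $\frac{1}{k-s}\cdot\frac{k-s}{k-q}=\frac{1}{k-q}$ after some algebra, so it works, but the paper's route is cleaner: compute only the $c\in S_L(v)\setminus\chi(N(v))$ case (a single straightforward product, which you also write out) and then invoke the symmetry of the construction in $[k]\setminus S_L(v)$ to deduce that the leftover mass is split uniformly. Either way is fine; the symmetry argument just saves the inclusion--exclusion.
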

\begin{proof}
The first two items are trivial, and the fourth item follows in the same way as in \cite[Lemma~2.2(d)]{BC20}. We now verify the key third item.  First, note that $p_\chi(v)\in[0,1]$ by the condition given. Indeed, since $|\chi(N(v))|\in(0,\Delta]$ we easily see that $p_\chi(v)\ge 0$, and moreover, that $p_\chi(v)$ is maximized by its value when $|\chi(N(v))| = 0$ or $|\chi(N(v))| = \Delta$ (since the denominator of $p_{\chi}(v)$ is a concave function of $|\chi(N(v))|$ on $[0,\Delta]$). In the former case, we have $p_\chi(v) = |S_L(v)|/k\le 1$. In the latter case, we have
\[p_\chi(v) = \frac{|S_L(v)|^2}{(k-\Delta)(\Delta+|S_L(v)|)}\le 1\]
by the given condition.

Finally, consider any color $c\in S_L(v)\setminus\chi(N(v))$. It is chosen if and only if $\tau\le p_\chi(v)$, and also either $c_2 = c$ or $c_2\in\chi(N(v))$ and $c_3 = c$ (and the latter case clearly satisfies $c_2\neq c$). This occurs with probability
\[\bigg(\frac{1}{|S_L(v)|} + \frac{|\chi(N(v))|}{|S_L(v)|}\cdot\frac{1}{|S_L(v)|}\bigg)\cdot\frac{|S_L(v)|^2}{(k-|\chi(N(v))|)(|\chi(N(v))| + |S_L(v)|)} = \frac{1}{k - |\chi(N(v))|},\]
as required. Furthermore, since colors in $[k]\setminus S_L(v)$ are symmetrically chosen, the result again immediately follows.
\end{proof}

\subsection{\texorpdfstring{$\tsc{disjoint}$}{DISJOINT}}\label{sub:disjoint}
We now define our most complicated update, $\tsc{disjoint}$, which can be seen as combining the $\tsc{contract}$ update in \cite{BC20} with a bounding list version of Jerrum's analysis of the Glauber dynamics in \cite{Jer95}, by pairing up colors in $D_L(v)$. This additional pairing, compared to the analysis in \cite{BC20, Hub98}, is critical in obtaining a better drift estimate in the final coalescence phase and ensuring that the final stages of Phase $3$ succeed for $k<3\Delta$.
\bigskip

\begin{algorithm}[H]\label{alg:disjoint}
\caption{$\tsc{disjoint}$ -- Takes an input update $\alpha_\init = (v_\init,\tau_\init,L_\init,L_\init',M_\init,\gamma_\init)$ and a vertex $v$ with $S-Q< (k - \Delta)(\frac{k-Q}{k-Q-D/2})$, and outputs a compatible ``disjoint update''.}
\SetKwInOut{KwIn}{Input}
\SetKwInOut{KwOut}{Output}
\SetKwProg{myproc}{Function}{}{}
\myproc{\tsc{disjoint.gen}\rm{:}}{
 
\KwIn{$\alpha_\init=(v_\init,\tau_\init,L_\init,L'_\init, M_\init,\gamma_\init)$, $v\in V(G)$ with $S-Q < (k - \Delta)(\frac{k-Q}{k-Q-D/2})$}
\KwOut{$\alpha_\out=(v_\out,\tau_\out,L_\out,L'_\out,M_\out,\gamma_\out)$}
$\gamma_\out \assign 3$; $v_\out \assign v$; $L_\out \assign L_\init'$; $L_\out' \assign L_\out$; $\tau_\out \assign \tsc{unif}[0,1]$;\\
\If{$\tsc{unif}[0,1] > \frac{k-Q-D}{k-Q-D/2}$} {
$w \assign \tsc{unif}(N_L^\ast(v))$; $L_\out'(v) \assign N(w)$; $M_\out \assign N(w)$; \Comment{Arbitrarily order $N(w)$}
} \Else {
$c_1 \assign \tsc{unif}([k]\setminus S_L(v))$; $c_2 \assign \tsc{unif}(D_L(v))$;\\
$q(v) \assign 1 - \frac{(k-Q-D/2)E}{(k-Q-D)(k-\Delta)}$; $p_\Delta(v) \assign \frac{(\Delta-Q-D/2)D}{(k-\Delta)(k-Q-D)q(v)}$\\
\If{$\tsc{unif}[0,1]\le q(v)$} {
\If{$\tsc{unif}[0,1] > p_\Delta(v)$} {
$L_\out'(v) \assign \{c_1\}$; $M_\out \assign (c_1)$;
} \Else {
$L_\out'(v) \assign \{c_1,c_2\}$; $M_\out \assign (c_1,c_2)$;
}
} \Else {
$c_2 \assign \tsc{unif}(E_L(v))$; $L_\out'(v) \assign \{c_1,c_2\}$; $M_\out \assign (c_1,c_2)$;
}
}
$\alpha_\out \assign (v_\out,\tau_\out,L_\out,L_\out',M_\out,\gamma_\out)$;
}
\myproc{\tsc{disjoint.decode}\rm{:}}{
 
\KwIn{$\alpha=(v,\tau,L,L', M,\gamma)$ with $\gamma = 3$, $S-Q < (k - \Delta)(\frac{k-Q}{k-Q-D/2})$, and a coloring $\chi\sim L$}
\KwOut{$\chi'\sim L'$}
$\chi' \assign \chi$;\\
$q(v)\assign 1 - \frac{(k-Q-D/2)E}{(k-Q-D)(k-\Delta)}$; $p_\chi(v) \assign \frac{(|\chi(N(v))| - Q - D/2)D}{(k-|\chi(N(v))|)(k-Q-D)q(v)}$; $p_\chi'(v) \assign \frac{k-\Delta}{k-|\chi(N(v))|}$;\\
\If{$|M| = 1$\emph{ or }$M[1,2]\subseteq D_L(v)$} {
$\chi'(v) \assign M\setminus\chi(N(v))$;\Comment{This is size $1$ due to the disjointness condition}
} \Else {
\If{$M[2]\in D_L(v)$} {
$r_\chi(v) \assign p_\chi(v)/p_\Delta(v)$;
} \Else {
$r_\chi(v) \assign p_\chi'(v)$;
}
\If{$M[2]\in\chi(N(v))$\emph{ or }$\tau > r_\chi(v)$}{
$\chi'(v) \assign M[1]$;
} \Else {
$\chi'(v) \assign M[2]$;
}
}

}
\end{algorithm}
\bigskip
\bigskip
\bigskip
To begin, recall from \cref{sub:notation} that $S_L(v)$ is the set of colors appearing in the bounding lists $L(w)$ for neighbors $w$ of $v$, $Q_L(v)$ is the set of colors that appear in some bounding list $L(w)$ for $w\in N(v)$ with $|L(w)| = 1$, i.e., the bounding list forces this color to appear in $\chi(N(v))$ if $\chi\sim L$,  
\[N_L^\ast(v) = \{w\in N(v): |L(w)|=2\text{ and }L(w)\cap L(w') = \emptyset\text{ if }w'\in N(w),w'\neq w\},\]
and the disjoint-pair colors associated to $v$ are
\[D_L(v) = \bigcup_{w\in N_L^\ast(v)}L(w).\]
The key property of the disjoint-pair colors is that they appear in exactly one bounding set of a neighbor of $v$, and moreover, are naturally paired up with another disjoint-pair color via the bounding set of the \emph{same} element of $N_L^\ast(v)$. 
Note that $Q_L(v)\cap D_L(v) = \emptyset$ and $D_L(v)$ is a disjoint union of pairs $L(w)$ for $w\in N_L^\ast(v)$; in particular, $\chi(N(v))$ always has at least $|Q_L(v)|+|D_L(v)|/2$ different colors. Recall also that $E_L(v) = S_L(v)\setminus (Q_L(v)\cup D_L(v))$.

Finally, for the sake of notational lightness, throughout this subsection and in the definition of \cref{alg:disjoint}, we let $S = |S_L(v)|$, $Q = |Q_L(v)|$, $D = |D_L(v)|$, and $E = |E_L(v)| = S - Q - D$.

\begin{lemma}\label{lem:disjoint}
Let $\alpha_\init = (v_\init,\tau_\init,L_\init,L_\init',M_\init,\gamma_\init)$, and choose $v\in V(G)$ such that $S-Q < (k - \Delta)(\frac{k-Q}{k-Q-D/2})$. Let $\alpha_\out = (v_\out,\tau_\out,L_\out,L_\out',M_\out',\gamma_\out')$ be the output of $\tsc{disjoint.gen}[\alpha_\init,v]$. Let $\chi$ be a coloring and let $\chi' = \tsc{disjoint.decode}[\alpha_\out,\chi]$. Then:
\begin{enumerate}[(D1)]
    \item $L_\out = L_\init'$, $L_\out'(u) = L_\out(u)$ for $u\neq v$, and $|L_\out'(v)|\le 2$. Moreover, $|L_\out'(v)| = 1$ with probability $1 - \frac{S-Q}{k-\Delta}+\frac{D/2}{k-Q-D/2}$.
    \item If $\chi\sim L_\out$, then $\chi'\sim L_\out'$.
    \item For $\chi\sim L_\out$, the random variable $\chi'$ is uniformly distributed over the set of colorings satisfying $\chi'(w) = \chi(w)$ for $w\neq v$ (i.e., this follows Glauber dynamics).
    \item Other than copying $L_\init'$, the expected runtime of $\tsc{disjoint.gen}$ is $O(\Delta(\log k + \log n))$. The runtime of $\tsc{disjoint.decode}$ is $O(\Delta(\log k + \log n))$.
\end{enumerate}
\end{lemma}
\begin{proof}
To begin, we check that the quantities $q(v), p_{\chi}(v), p_{\Delta}(v), p'_{\chi}(v)$ appearing in \cref{alg:disjoint} lie in $[0,1]$ and $p_{\chi}(v)\le p_{\Delta}(v)$. Since $|\chi(N(v))|\le \Delta < k$, it follows that $p'_{\chi}(v)\in [0,1]$. Next, since
\begin{equation}\label{eq:disjoint-pchi}
\frac{k-Q-D}{k-Q-D/2}\cdot(1-q(v))\cdot\frac{1}{E}\cdot p_\chi'(v) = \frac{1}{k-|\chi(N(v))|},
\end{equation}
it follows that $1 - q(v) > 0$. 
Also, 
\begin{align*}
    \frac{1}{k-Q-D/2} + \frac{k-Q-D}{k-Q-D/2}\cdot q\cdot\frac{1}{D}
    &= \frac{k-Q}{(K-Q-D/2)D} - \frac{E}{(k-\Delta)D}\\
    &= \frac{k-Q}{(k-Q-D/2)D} - \frac{S-Q-D}{(k-\Delta)D}\\
    &> \frac{S-Q}{(k-\Delta)D} - \frac{S-Q-D}{(k-\Delta)D}\\
    &= \frac{1}{k-\Delta} \\
    &\ge \frac{1}{k-|\chi(N(v))|}\\
    &\ge \frac{1}{k-Q-D/2};
\end{align*}
where the strictly inequality uses our assumption that $S-Q < (k - \Delta)(\frac{k-Q}{k-Q-D/2})$; this shows that $q(v) > 0$. Since   
\begin{equation}\label{eq:disjoint-pchi'}
\frac{1}{k-Q-D/2} + \frac{k-Q-D}{k-Q-D/2}\cdot q(v)\cdot\frac{1}{D}\cdot p_\chi(v) = \frac{1}{k-|\chi(N(v))|},
\end{equation}
combining with the previous inequality shows that $p_{\chi}(v) \in [0,1]$. A similar argument also shows that $p_{\Delta} \in [0,1]$. Finally, since $|\chi(N(v))| \le \Delta$, it follows that $p_{\chi}(v) \le p_{\Delta}(v)$.

We now proceed to the proof of the items in the conclusion of the lemma. The second item is trivial, and the fourth item follows as in \cite[Lemma~2.2(d)]{BC20}. The only non-trivial part of the first item is the claim about the probability with which $|L'_\out(v)|=1$, which we will check at the end of the proof. 
We now verify the third item. 

\begin{itemize}
\item The expression on the left hand side of \cref{eq:disjoint-pchi} is the probability that a particular $c\in E_L(v)\setminus\chi(N(v))$ is chosen as $\chi'(v)$, since for this to happen, we must have chosen the second case of $\tsc{disjoint.gen}$ (which happens with probability $(k-Q-D)/(k-Q-D/2)$), then the second subcase of this (which happens independently with probability $1-q(v)$), and then chosen $c\in E_L(v)$ from a uniform sample (which happens independently with probability $1/E$), all before choosing the last line of $\tsc{disjoint.decode}$ (which happens independently with probability $p_\chi'(v)$).
\item The expression on the left hand side of \cref{eq:disjoint-pchi'} is the probability that a particular $c\in D_L(v)\setminus\chi(N(v))$ is chosen as $\chi'(v)$. 
\begin{itemize}
\item The first term comes from the case where we generate $L_\out'(v) = \{c,c'\} = N(w)$ for some $w\in N_L^\ast(v)$ (which happens with probability $(1-(k-Q-D)/(k-Q-D/2))\times 2/D = 1/(k-Q-D/2$)) -- note that this always decodes to $c$ since $c$ is the unique element in $\{c,c'\}$ which is not in $\chi(N(v))$). 
\item The second term is similar to the previous paragraph -- specifically, we must choose the second case of $\tsc{disjoint.gen}$ (which happens with probability $(k-Q-D)/(k-Q-D/2)$), then the first subcase of that (which happens independently with probability $q(v)$), choose $c \in D_L(v)$ from a uniform sample (which independently happens with probability $1/D$), enter line 12 of $\tsc{disjoint.gen}$ (which happens independently with probability $p_{\Delta}(v)$), and finally, enter line 30 of $\tsc{disjoint.decode}$ (which happens independently with probability $p_{\chi}(v)/p_{\Delta}(v)$).
\end{itemize}
\item Finally, all remaining colors in $[k]\setminus\chi(N(v))$ are in $[k]\setminus S_L(v)$, and are treated uniformly, hence as before we have the desired uniformity.
\end{itemize}


Finally, we verify the remaining claim in the first item. Indeed, the bounding chain gives a set of size $1$ with probability
\begin{align*}
\frac{k-Q-D}{k-Q-D/2}q(v)(1-p_\Delta(v)) &= \frac{k-Q-D}{k-Q-D/2}\bigg(1 - \frac{(k-Q-D/2)E}{(k-Q-D)(k-\Delta)}\\
&\qquad\qquad\qquad\qquad\qquad - \frac{(\Delta-Q-D/2)D}{(k-\Delta)(k-Q-D)}\bigg)\\
&= 1 - \frac{S-Q}{k-\Delta} + \frac{D/2}{k-Q-D/2},
\end{align*}
as desired. 
\end{proof}

\section{Analysis of \texorpdfstring{$\tsc{SamplerUnit}$}{SAMPLERUNIT}}
\label{sec:analysis}

Let $\tsc{SamplerUnit}$ be defined as in \cref{sec:alg}. More formally, let $T$ be the total number of updates used in the four phases, and starting from time $-T$, let $ (\alpha_{-T})_{\out},\dots,(\alpha_{-1})_{\out}$ be the $T$ updates described in \cref{sec:alg} generated as follows: for each $t \in [T]$, $(v_{-t})_{\init}$ and $(\gamma_{-t})_{\init}$ are chosen as described in \cref{sec:alg}. Moreover, 
\[((\tau_{-t})_{\init}, (L_{-t})_{\init}, (L'_{-t})_{\init}, (M_{-t}){_\init}) = ((\tau_{-t-1})_{\out}, (L_{-t-1})_{\out}, (L'_{-t-1})_{\out}, (M_{-t-1}){_\out} ),\]
with the initial conditions
\[((\tau_{-T})_{\init}, (L_{-T})_{\init}, (L'_{-T})_{\init}, (M_{-T}){_\init}) = (1, \prod_{v \in V(G)}[k], \prod_{v\in V(G)}[k], \emptyset).\]
We slightly overload notation (this does not create any confusion) by using $F$ to refer both to the sequence of tuples 
$(\alpha_{-T})_{\out},\dots,(\alpha_{-1})_{\out}$, as well as the composite function
$(\alpha_{-1})_{\out}.\tsc{decode}\circ \dots \circ (\alpha_{-T})_{\out}.\tsc{decode},$
interpreted in the obvious way, generated by these tuples. Also, the predicate $\Phi(F)$ is defined as  evaluating to $\tsc{true}$ iff $(L'_{-1})_{\out}$ is a list of sets of size $1$.\\

To complete the proof of \cref{thm:main}, we need to check two things: 
\begin{enumerate}[(Q1)]
\item Our description of $\tsc{SamplerUnit}$ is well-defined i.e.\ the promise required to execute $\tsc{seeding}$ and $\tsc{disjoint}$ is satisfied at every step. 
\item The resulting $\mathcal{D}, \Phi$ satisfy properties (P1)-(P4) of \cref{lemma:CFTP}. 
\end{enumerate}
The next subsection contains an analysis of Phase 4 of $\tsc{SamplerUnit}$, and completely addresses (Q2).

\subsection{Drift analysis: Phase 4 succeeds with probability at least \texorpdfstring{$1/2$}{1/2}}\label{sub:drift}
The goal of this subsection is to show that after all bounding sets have been reduced to size at most $2$, applying $\tsc{disjoint}$ a sufficient number of times at a uniformly randomly chosen vertex gives coalescence with sufficiently high probability. This is the analogue of \cite[Lemma~2.5]{BC20}. As in \cite{BC20,Hub98}, we will make use of the following random walk lemma due to Huber \cite{Hub98} (stated below with minor indexing errors corrected).
\begin{theorem}[{\cite[Theorem~4]{Hub98}}]\label{thm:rand-walk}
Suppose that $X_t$ is a random walk on $\{0,1,\ldots,n\}$ where $0$ is a reflecting state and $n$ is an absorbing state. Further, assume that $|X_{t+1}-X_{t}|\le 1$, and $\mb{E}[X_{t+1}-X_{t}~|~X_t=i]\ge \kappa_i>0$ for all $X_t<n$.
Let $e_i$ be the expected number of times the walk hits the state $i$. 
Then, \[\sum_{i=0}^{n-1}e_i\le \sum_{i=0}^{n-1}\frac{1}{\kappa_i}.\]
\end{theorem}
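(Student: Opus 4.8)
The plan is to prove this by a potential-function argument (submartingale plus optional stopping): one exhibits a function $\psi$ of the state for which $\psi(X_t)-t$ is a submartingale up to absorption, and then lets $t\to\infty$. Let $T=\inf\{t\ge0:X_t=n\}$ be the absorption time. Since for $t<T$ the walk occupies exactly one state of $\{0,\dots,n-1\}$ and for $t\ge T$ it sits at $n$, we have $\sum_{i=0}^{n-1}e_i=\mathbb{E}[T]$, so it suffices to show $\mathbb{E}[T]\le\sum_{i=0}^{n-1}1/\kappa_i$ for an arbitrary (possibly random) starting state.

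Define $\psi\colon\{0,\dots,n\}\to\mathbb{R}_{\ge0}$ by $\psi(m)=\sum_{i=0}^{m-1}1/\kappa_i$, so that $\psi(0)=0$, $\psi$ is increasing, and $\psi(n)=\sum_{i=0}^{n-1}1/\kappa_i$ is exactly the desired bound. The crux is the one-step estimate
\[
\mathbb{E}\!\left[\psi(X_{t+1})-\psi(X_t)\mid\mathcal{F}_t\right]\ge 1\quad\text{on }\{X_t<n\},
\]
where $(\mathcal{F}_t)$ is the natural filtration. Granting it, $Z_t:=\psi(X_{t\wedge T})-(t\wedge T)$ is a submartingale (its increment vanishes once $t\ge T$ and is $\ge0$ while $t<T$), whence $\mathbb{E}[t\wedge T]\le\mathbb{E}[\psi(X_{t\wedge T})]\le\psi(n)$ using $0\le\psi\le\psi(n)$ and $\mathbb{E}[Z_0]\ge0$; letting $t\to\infty$ and applying monotone convergence gives $\mathbb{E}[T]\le\psi(n)$ (so in particular $T<\infty$ almost surely), which is the claim.

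To prove the one-step estimate, fix $i<n$ and put $p^+=\mathbb{P}[X_{t+1}=i+1\mid\mathcal{F}_t]$ and $p^-=\mathbb{P}[X_{t+1}=i-1\mid\mathcal{F}_t]$ on $\{X_t=i\}$, with $p^-=0$ when $i=0$ by reflection. The drift hypothesis is $p^+-p^-\ge\kappa_i$, and by construction $\psi(i+1)-\psi(i)=1/\kappa_i$ and $\psi(i)-\psi(i-1)=1/\kappa_{i-1}$; hence
\[
\mathbb{E}\!\left[\psi(X_{t+1})-\psi(X_t)\mid\mathcal{F}_t\right]=\frac{p^+}{\kappa_i}-\frac{p^-}{\kappa_{i-1}}\ge\frac{p^+-p^-}{\kappa_i}\ge 1,
\]
where the first inequality uses $1/\kappa_{i-1}\le1/\kappa_i$.

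The one place that requires care — and the main obstacle — is exactly that last step: it needs $\psi$ to be convex, i.e.\ the drift bounds $\kappa_i$ to be non-increasing in $i$. This holds automatically in the applications of the lemma (in Phase~4 the drift at ``state $i$'' is proportional to $n-i$, hence decreasing), but in full generality one should run the argument with $\kappa_i':=\min_{j\le i}\kappa_j$, which only weakens the bound to $\sum_i 1/\kappa_i'$. An equivalent martingale-free route, which exposes the same point, is flow conservation: writing $u_i,d_i$ for the expected numbers of up-/down-crossings of the edge $\{i,i+1\}$, one has $u_i-d_i=1$, $d_{n-1}=0$, and $\kappa_ie_i\le u_i-d_{i-1}$ (summing the drift over the visits to state $i$); substituting and telescoping by Abel summation gives $\sum_ie_i\le\sum_i1/\kappa_i+\sum_i(1/\kappa_i-1/\kappa_{i+1})d_i$, the error term being $\le0$ under the same monotonicity. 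I would present the potential-function version as the main proof, as it is cleanest to make rigorous.
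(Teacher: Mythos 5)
Your potential-function/submartingale argument is a sound and natural route to this lemma, and it is essentially the flavor of Huber's argument (the paper itself does not reprove the statement, citing \cite{Hub98} directly). Your bookkeeping is correct: $Z_t=\psi(X_{t\wedge T})-(t\wedge T)$ is a submartingale, $\mathbb{E}[Z_t]\ge\mathbb{E}[Z_0]\ge 0$, and $0\le\psi\le\psi(n)$ together with monotone convergence give $\mathbb{E}[T]\le\psi(n)$; the flow-conservation variant you sketch is equivalent.

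The observation you flag as ``the one place that requires care'' deserves to be promoted to the headline: the hypothesis that $\kappa_i$ be non-increasing in $i$ (equivalently, that $\psi$ be convex) is not a blemish on your chosen method, but a hypothesis without which the stated inequality is simply \emph{false}. Take $n=2$ and start at $0$; from state $0$ move to $1$ with probability $0.05$ (else stay), and from state $1$ move to $2$ with probability $0.5$, to $0$ with probability $0.4$, and stay with probability $0.1$. The drifts are exactly $\kappa_0=0.05$ and $\kappa_1=0.1$, so $1/\kappa_0+1/\kappa_1=30$, while a direct computation gives $e_0=36$ and $e_1=2$, so $\sum_i e_i=38>30$. (If one reads ``reflecting'' as forcing a move from $0$ to $1$, the same phenomenon appears for $n=3$ with $\kappa_1<\kappa_2$.) So the inequality needs $\kappa_0\ge\kappa_1\ge\cdots\ge\kappa_{n-1}$ as an explicit hypothesis, and once it is added your proof is complete as written. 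This does not affect the paper's use of the lemma: in \cref{lem:drift} the drift lower bound is $\kappa_i=\frac{n-i}{n}\bigl(1-\frac{3\Delta/2}{k-\Delta}\bigr)$, which is indeed non-increasing. Finally, note that your fallback $\kappa_i':=\min_{j\le i}\kappa_j$ proves only the weaker bound $\sum_i e_i\le\sum_i 1/\kappa_i'$ with $\sum_i 1/\kappa_i'\ge\sum_i 1/\kappa_i$; that is enough for the application, but it does not recover the inequality as literally stated, so the cleaner resolution is to state the monotonicity hypothesis rather than to weaken the conclusion.
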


\begin{lemma}\label{lem:drift}
Assume $k > 5\Delta/2$ and let $T_{D} = 2\frac{k-\Delta}{k-5\Delta/2}n\log n$. Suppose that we have an update $\alpha_0 = \alpha$ with bounding list $L'$ satisfying $|L'(v)|\le 2$ for all $v\in V(G)$. Consider a random sequence of $\tsc{disjoint}$ updates $\alpha_1,\ldots,\alpha_{T_D}$ generated in sequence (note the forward time indexing) via
\[\alpha_t = \tsc{disjoint.gen}[\alpha_{t-1},\tsc{unif}(V(G))]\]
for all $1\le t\le T_{D}$. Let $\alpha_{T_D} = (\cdot,\cdot,\cdot,L_0',\cdot,\cdot)$. Then $|L_0'(v)| = 1$ for all $v\in V(G)$ with probability at least $1/2$.
\end{lemma}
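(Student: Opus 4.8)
The plan is to track the potential function $W(L) := \sum_{v \in V(G)} (|L(v)| - 1) = |\{v : |L(v)| = 2\}|$, which is a nonnegative integer bounded by $n$, and argue that it performs a supermartingale-type random walk with a definite downward drift, so that \cref{thm:rand-walk} (applied to $X_t = n - W$, with $n$ absorbing and $0$ reflecting) controls the expected number of steps before $W$ first hits $0$. After that we convert the bound on expected hitting time into a $1/2$ success probability via Markov's inequality, using the choice $T_D = 2\frac{k-\Delta}{k-5\Delta/2}n\log n$.

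Concretely, fix a time $t$ with current bounding list $L$, and condition on $W(L) = m \ge 1$. The next update picks $v = \tsc{unif}(V(G))$ and applies $\tsc{disjoint}$; by (D1), $|L'_\out(v)| \le 2$ always, and $|L'_\out(v)| = 1$ with probability exactly $1 - \frac{S-Q}{k-\Delta} + \frac{D/2}{k-Q-D/2}$ where $S = |S_L(v)|, Q = |Q_L(v)|, D = |D_L(v)|$. So the change $W(L') - W(L) \in \{-1, 0, +1\}$: it is $-1$ iff $|L(v)| = 2$ and the new set has size $1$ (this is $-\mathbbm{1}[|L(v)|=2]$ times the collapse probability in expectation conditional on $v$), and it can be $+1$ only if $|L(v)| = 1$ but the new set has size $2$, which happens with probability at most $\frac{S-Q}{k-\Delta}$. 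Here is the key point: since $|L(w)| \le 2$ for every neighbor $w$ and every size-$2$ neighbor contributes at most $2$ to $S$, while every size-$1$ neighbor contributes exactly one color to $Q$, a short extremal argument (the robust version of ``$|S_L(v)| > 3\Delta/2 \Rightarrow D_L(v) \ne \emptyset$'' alluded to in \cref{sec:lifting-jerrum}) gives $S - Q - D/2 \le k - 5\Delta/2 \cdot (\text{something})$... more precisely one shows $\frac{S-Q}{k-\Delta} - \frac{D/2}{k-Q-D/2} \le \frac{?}{k-\Delta}$ is controlled so that the collapse probability is at least $\frac{k - 5\Delta/2}{k-\Delta}$ whenever $|L(v)| = 2$. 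Averaging the per-vertex drift over the uniform choice of $v$ then yields
\[
\mathbb{E}[W(L') - W(L) \mid W(L) = m] \le -\frac{1}{n}\cdot m \cdot \frac{k - 5\Delta/2}{k-\Delta},
\]
because the only vertices that can decrease $W$ are the $m$ size-$2$ vertices (each collapsing with probability $\ge \frac{k-5\Delta/2}{k-\Delta}$), and the size-$1$ vertices' positive contribution is exactly cancelled: when $v$ has $|L(v)|=1$, $v$ itself contributes $0$ to $Q_L(v)$'s counting from $v$'s own neighbors — rather, the point is that a size-$1$ vertex $v$ turning into size $2$ forces $v$ to no longer be forcing a color, and a careful accounting (as in \cite{BC20}) shows the net expected increase from all size-$1$ vertices is at most the net expected decrease slack, leaving the clean bound above.

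Translating: set $X_t = n - W(L_t)$, a walk on $\{0, \dots, n\}$ with $n$ absorbing (all sets size $1$) and $0$ reflecting, with $|X_{t+1} - X_t| \le 1$ and $\mathbb{E}[X_{t+1} - X_t \mid X_t = i] \ge \kappa_i := \frac{n-i}{n}\cdot\frac{k-5\Delta/2}{k-\Delta}$ for $i < n$. By \cref{thm:rand-walk}, the expected number of steps to absorption is at most $\sum_{i=0}^{n-1} \frac{1}{\kappa_i} = \frac{k-\Delta}{k-5\Delta/2}\sum_{j=1}^{n}\frac{n}{j} \le \frac{k-\Delta}{k-5\Delta/2}\, n(\log n + 1) \le \frac{1}{2} T_D$ for $n$ large; Markov's inequality then gives that the walk is absorbed within $T_D$ steps with probability at least $1/2$, i.e.\ $|L_0'(v)| = 1$ for all $v$ with probability at least $1/2$, as claimed.

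The main obstacle I expect is the drift computation itself: establishing that, after averaging over the random vertex, the positive contributions from size-$1$ vertices (which may expand to size $2$) do not overwhelm the $\ge \frac{k-5\Delta/2}{k-\Delta}$ collapse probability at size-$2$ vertices. This requires (i) the extremal/robust bound relating $S, Q, D$ so that the collapse probability from (D1) is genuinely $\ge \frac{k-5\Delta/2}{k-\Delta}$ rather than merely $\ge \frac{k-3\Delta}{k-\Delta}$ — this is exactly where the $D_L(v)$-pairing buys the improvement past $3\Delta$ — and (ii) a telescoping/charging argument, in the spirit of Jerrum's and Bhandari–Chakraborty's drift analyses, showing the expected number of ``bad'' size-increasing events is bounded by (a constant fraction of) the expected number of ``good'' collapses, uniformly in the configuration. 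I would isolate (i) as a standalone combinatorial claim about bounding lists with all sets of size $\le 2$, prove it by double counting colors against neighbors, and then feed it into (ii).
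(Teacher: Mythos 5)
Your high-level plan is the same as the paper's: track the number $X_t$ of size-$1$ vertices, show a downward drift for $n-X_t$ proportional to the number of size-$2$ vertices, invoke \cref{thm:rand-walk} to bound the expected absorption time by $\frac{k-\Delta}{k-5\Delta/2}\,n\,H_n \le T_D/2$, and finish by Markov. Your final drift inequality $\mathbb{E}[W(L')-W(L)\mid W=m]\le -\tfrac{m}{n}\cdot\tfrac{k-5\Delta/2}{k-\Delta}$ is exactly what the paper proves (with $W=n-X$). You also correctly identified, at the end, that the content is in proving this drift.

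However, the drift inequality itself is the lemma, and your argument for it does not close. The issue is that you bound only the negative (collapse) contribution from size-$2$ vertices by $-\tfrac{m}{n}\cdot\tfrac{k-5\Delta/2}{k-\Delta}$, and then assert that the positive contribution from size-$1$ vertices expanding back to size $2$ ``is exactly cancelled'' by a ``telescoping/charging argument,'' with a heuristic about a size-$1$ vertex ``no longer forcing a color.'' That heuristic is off-track: nothing about $v$ ceasing to force a color controls the sum $\sum_{v:|L(v)|=1} q(v)$, where $q(v)=\tfrac{S-Q}{k-\Delta}-\tfrac{D/2}{k-Q-D/2}$ is the per-vertex expansion probability from (D1). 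Worse, a purely per-vertex bound of the form ``$q(v)\le \tfrac{3\Delta/2}{k-\Delta}$ for every $v$'' --- which is what your claim (i) gives --- is not enough, because with $k$ just above $5\Delta/2$ each $q(v)$ can be close to $1$, and there can be up to $n-m$ size-$1$ vertices each contributing $q(v)$ positively; that would swamp the negative term.

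What actually makes the argument work is a \emph{vertex-localized} version of your bound (i): one needs
\[
|S_L(v)|-|Q_L(v)| \;\le\; \tfrac{3}{2}\,|\overline{W_t}\cap N(v)| \;+\; \tfrac{|D_L(v)|}{4},
\]
which is the paper's \cref{eq:weights}, proved by assigning weight $1/m$ to a non-forced color appearing in $m$ bounding lists of $N(v)$ and noting each size-$2$ neighbor contributes total weight at most $3/2$ (and exactly $2$ only if it lies in $N_L^\ast(v)$). Dividing by $k-\Delta$ and using $k-Q-D/2\le k\le 2(k-\Delta)$ yields $q(v)\le \tfrac{3}{2}\cdot\tfrac{|\overline{W_t}\cap N(v)|}{k-\Delta}$. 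This is \emph{not} just $\le \tfrac{3\Delta/2}{k-\Delta}$: it ties $q(v)$ to the number of size-$2$ neighbors of $v$. Summing over \emph{all} vertices (size-$1$ and size-$2$ alike) and then double-counting, $\sum_{v\in V(G)} |\overline{W_t}\cap N(v)| = \sum_{w\in\overline{W_t}}\deg(w)\le \Delta m$, gives $\sum_v q(v) \le \tfrac{3\Delta m}{2(k-\Delta)}$; plugging this into $\mathbb{E}[W'-W]=\tfrac1n\bigl[\sum_v q(v)-m\bigr]$ delivers the drift. This averaging-then-double-counting step is the missing idea; without it the positive contributions are not controlled. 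Separately, you also omit verifying that the promise of $\tsc{disjoint}$ holds at every step of Phase $4$ --- the same inequality \cref{eq:weights}, combined with $k\ge 5\Delta/2$, is what certifies it.
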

\begin{remark}
Part of the assertion of this lemma is that the promise required to execute $\tsc{disjoint}$ is satisfied throughout. 
\end{remark}
\begin{proof}
Let $W_t = \{v\in V(G): |L_t'(v)| = 1\}$, let $X_t = |W_t|$, and let $\ol{W_t} = V(G)\setminus W_t$. Clearly, $X_t\in\{0,\ldots,n\}$ and $|X_{t+1}-X_t|\le 1$, since each update changes the bounding list for at most one vertex. We now show that
\begin{equation}\label{eq:drift}
\mb{E}[X_{t+1}-X_t|X_t]\ge\frac{n-X_t}{n}\bigg(1-\frac{3\Delta/2}{k-\Delta}\bigg).
\end{equation}
After proving \cref{eq:drift}, the claim follows immediately by \cref{thm:rand-walk}, since $n$ is easily verified to be an absorbing state (because in such a situation, any vertex $v\in V(G)$ has $|S_L(v)| - |Q_L(v)| = |D_L(v)| = 0$).

To prove \cref{eq:drift} we first show that if $|L(w)| \le 2$ for all $w \in V(G)$, then
\begin{equation}\label{eq:weights}
|S_L(v)|-|Q_L(v)|\le\frac{3}{2}|\ol{W_t}\cap N(v)| + \frac{|N_L^\ast(v)|}{2} \le \frac{3}{2}|\ol{W_t}\cap N(v)| +  \frac{|D_L(v)|}{4} 
\end{equation}
To see this, consider assigning weights to each color: a color in $S_L(v)\setminus Q_L(v)$, which appears in $m$ bounding lists, is assigned weight $1/m$, and any other color is assigned weight $0$. In particular, for any $w\in N_L^\ast(v)$, both elements of $L(w)$ are weight $1$ so that the sum of weights in $L(w)$ for $w \in N_L^\ast(v)$ is $2$. Also, if $|L(w)| = 1$, then sum of weights is $0$ by definition. Finally, in all other cases the sum of weights is at most $1 + 1/2 = 3/2$ (since $w \notin N_L^*(v)$ implies that at least one of the two colors must appear in at least two bounding lists). Now, \cref{eq:weights} follows by noting that the leftmost quantity is the sum of all the weights in all $L(w)$ for $w\in N(v)$ (counting colors multiple times), whereas the middle quantity is a trivial upper bound for this sum given the information above, and the rightmost inequality follows by noting that $|N_L^\ast(v)| = |D_L(v)|/2$. 

Equation \cref{eq:weights} shows that the condition needed to apply $\tsc{disjoint.gen}$ is satisfied at every step it is used. Specifically, we find that \[|S_L(v)|-|Q_L(v)|\le\frac{3}{2}\Delta+\frac{|D_L(v)|}{4}\le (k-\Delta)\frac{k-|Q_L(v)|}{k-|Q_L(v)|-|D_L(v)|/2},\]
where the second inequality holds since  $D_L(v)\in[0,2\Delta]$, $k\ge 2.5\Delta$, and $|Q_L(v)|\ge 0$.

Moreover, dividing \cref{eq:weights} by $k - \Delta$, we immediately deduce that
\begin{equation}\label{eq:profit}
\frac{|S_L(v)|-|Q_L(v)|}{k-\Delta} - \frac{|D_L(v)|/2}{k-|Q_L(v)|-|D_L(v)|/2}\\
\le\frac{3}{2}\frac{|\ol{W_t}\cap N(v)|}{k-\Delta},
\end{equation}
where for the second term on the left hand side, we have used that $k-|Q_L(v)|-|D_L(v)|/2\le k\le 2(k-\Delta)$.

Therefore,
\begin{align*}
\mb{E}[X_{t+1}-X_t|L_t'] &= \frac{1}{n}\bigg[\sum_{v\notin W_t}\bigg(1-\frac{|S_L(v)|-|Q_L(v)|}{k-\Delta}+\frac{|D_L(v)|/2}{k-|Q_L(v)|-|D_L(v)|/2}\bigg)\\
&\qquad- \sum_{v\in W_t}\bigg(\frac{|S_L(v)|-|Q_L(v)|}{k-\Delta}-\frac{|D_L(v)|/2}{k-|Q_L(v)|-|D_L(v)|/2}\bigg)\bigg]\\
&= \frac{1}{n}\bigg[|\ol{W_t}| - \sum_{v\in V(G)}\bigg(\frac{|S_L(v)|-|Q_L(v)|}{k-\Delta}-\frac{|D_L(v)|/2}{k-|Q_L(v)|-|D_L(v)|/2}\bigg)\bigg]\\
&\ge\frac{1}{n}\bigg[|\ol{W_t}|-\frac{3}{2}\sum_{v\in V(G)}\frac{|\ol{W_t}\cap N(v)|}{k-\Delta}\bigg]\\
&\ge\frac{|\ol{W_t}|}{n}\bigg[1-\frac{3\Delta/2}{k-\Delta}\bigg],
\end{align*}
where the penultimate inequality uses \cref{eq:profit} and the last inequality follows since the graph has maximum degree at most $\Delta$. Finally, the law of total expectation and $|\ol{W_t}| = n - X_t$ gives the desired inequality.
\end{proof}

In particular, \cref{lem:drift} shows that once $\tsc{SamplerUnit}$ reaches Phase 4, it succeeds with probability at least $1/2$. Therefore, it only remains to check that the first three phases are always completed successfully i.e. the various guarantees required by $\tsc{compress}, \tsc{seeding}$, and $\tsc{disjoint}$ are satisfied throughout the first three phases. 
\subsection{Phases \texorpdfstring{$1$}{1}, \texorpdfstring{$2$}{2}, and \texorpdfstring{$3$}{3} always succeed}
\begin{lemma}\label{lem:phase-1}
Phase $1$ succeeds deterministically, i.e., every application of $\tsc{compress}$ and  $\tsc{seeding}$ is guaranteed to satisfy its promise.
\end{lemma}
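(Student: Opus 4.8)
The plan is to track the bounding list through Phase 1 and check, at each vertex $v_i \in \mc{S}$ processed in order, that (a) every $\tsc{compress}$ call invoked while handling $v_i$ has a valid set $A$ of size $\Delta$, and (b) when $\tsc{seeding}$ is finally applied at $v_i$, the promise $|S_L(v_i)|^2/(\Delta + |S_L(v_i)|) \le k - \Delta$ holds. The key structural invariant I would maintain is: \emph{just before processing $v_i$, every vertex $w \in N(v_i) \cap \{v_1,\dots,v_{i-1}\}$ has $|L(w)| \le 3$.} This is exactly the ``all vertices in $\mc{S}$ have bounding set of size at most $3$'' conclusion noted in the description of Phase 1, applied to the already-processed neighbors of $v_i$, and it follows inductively from property (S1) of \cref{lem:seeding} (each $\tsc{seeding}$ produces a bounding set of size $\le 3$) together with the observation that once $v_j$ ($j<i$) has been seeded, no later $\tsc{compress}$ or $\tsc{seeding}$ step in Phase 1 ever touches $v_j$ again — because we only $\tsc{compress}$ neighbors of $v_i$ that are \emph{not} in $\{v_1,\dots,v_{i-1}\}$, and $\tsc{seeding}$ only modifies $v_i$ itself.

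Granting this invariant, I would verify (a) and (b) as follows. For (a): the set $A$ is required to contain $L(w)$ for each $w \in N(v_i) \cap \{v_j : j < i\}$; since there are at most $|N(v_i) \cap \mc{S}| \le \Delta/3$ such neighbors (by the second inequality of \cref{eq:seeding-conclusion}), and each has $|L(w)| \le 3$ by the invariant, the union of these bounding sets has size at most $3 \cdot \Delta/3 = \Delta$, so a valid $A$ of size exactly $\Delta$ exists — pad arbitrarily if the union is smaller. For (b): after all the $\tsc{compress}$ calls for $v_i$, each neighbor $w \in N(v_i)$ either lies in $\{v_1,\dots,v_{i-1}\}$ (in which case $L(w) \subseteq A$ by the choice of $A$, contributing no colors outside $A$) or was just $\tsc{compress}$-ed with input $A$ (in which case, by (C1) of \cref{lem:compress}, $L(w) = A \cup \{c_1\}$, contributing at most one color outside $A$); neighbors not yet reached — there are none outside $\mc{S}$ that precede, but any neighbor in $\mc{S}$ with larger index still got $\tsc{compress}$-ed since it is ``not in $\{v_1,\dots,v_{i-1}\}$''. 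Hence $S_L(v_i) \subseteq A \cup \{\text{one extra color per neighbor}\}$, giving $|S_L(v_i)| \le \Delta + |N(v_i)| \le 2\Delta$. Then $|S_L(v_i)|^2/(\Delta + |S_L(v_i)|)$ is increasing in $|S_L(v_i)|$, so it is at most $(2\Delta)^2/(3\Delta) = 4\Delta/3 \le k - \Delta$ using $k > (3-\eta)\Delta \ge 7\Delta/3$ (here $\eta < 1/3$). This is precisely the promise in \cref{lem:seeding}.

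The main obstacle — really the only subtlety — is being careful about \emph{which} neighbors of $v_i$ can contribute colors outside $A$, and ensuring $A$ is genuinely constructible with size exactly $\Delta$. The count $|N(v_i) \cap \mc{S}| \le \Delta/3$ together with the size-$\le 3$ invariant is exactly tight for the existence of $A$, so the argument leans essentially on \cref{prop:seeding}; I would state this dependence explicitly. Everything else is bookkeeping: confirm that $\tsc{compress}$ and $\tsc{seeding}$ never revisit an already-seeded vertex within Phase 1, so the induction closes, and note that the final bounding set size $\le 3$ at $v_i$ (from (S1)) re-establishes the invariant for subsequent vertices. No probabilistic argument is needed — every bound above is deterministic given the fixed set $\mc{S}$, which is why the lemma asserts deterministic success.
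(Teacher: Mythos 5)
Your proposal is correct and follows essentially the same route as the paper's proof: both argue that the set $A$ exists because $v_i$ has at most $\Delta/3$ neighbors in $\mc{S}$ each carrying a bounding set of size at most $3$ (the invariant you track explicitly), and both bound $|S_L(v_i)|\le 2\Delta$ by noting that already-seeded neighbors contribute nothing outside $A$ while $\tsc{compress}$-ed neighbors contribute at most one color each, after which the check $\frac{(2\Delta)^2}{\Delta+2\Delta}=\frac{4\Delta}{3}\le k-\Delta$ is identical. You spell out the inductive invariant and the monotonicity of $x\mapsto x^2/(\Delta+x)$ more explicitly, but the logical content matches the paper.
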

\begin{proof}
First, note that for every $i \in [s]$, we can indeed find the necessary associated set $A$ -- $v_i$ has at most $\Delta/3$ neighbors in $\mc{S}$ by construction, and each neighbor of $v_{i}$ in $\{v_{j}: j < i\}$ has a bounding list of size at most $3$ (since $\tsc{seeding}$ has already been applied to such a vertex); hence $A$ needs to contain a union of at most $\Delta/3$ sets of size at most $3$.

Next, note that when we perform $\tsc{seeding}$ on $v_i$, we trivially have $|S_L(v)|\le 2\Delta$ since each vertex to which we apply $\tsc{compress}$ (i.e.\ those neighbors of $v_i$ which do not precede it in $\mc{S}$) contributes up to $1$ additional color not present in $A$. The claim now follows upon noting that
\[\frac{|S_L(v)|^2}{\Delta+|S_L(v)|}\le\frac{(2\Delta)^2}{\Delta+(2\Delta)} = \frac{4\Delta}{3}\le k-\Delta.\qedhere\]
\end{proof}
\begin{lemma}\label{lem:phase-2}
Phase $2$ succeeds deterministically, i.e., every application of $\tsc{disjoint}$ is guaranteed to satisfy its promise.
\end{lemma}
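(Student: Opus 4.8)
The plan is purely to check well-definedness: for each $i\in[s]$ I will verify that (i) when processing $v_i$ in Phase~2 the associated set $A$ required by the $\tsc{compress}$ calls can be chosen, and (ii) the single $\tsc{disjoint}$ call applied to $v_i$ afterwards satisfies the promise of \cref{lem:disjoint}. Given (i) and (ii), \cref{lem:compress} (which needs only $|A|=\Delta$ and $k\ge\Delta+1$) and \cref{lem:disjoint} guarantee that every operation of Phase~2 is well-defined, and since no further randomness enters the analysis, Phase~2 succeeds deterministically.

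For (i), I would first record the invariant that, at the moment we begin processing $v_i$ in Phase~2, every vertex $w\in\mc{S}$ has $|L(w)|\le 3$. Indeed, Phase~1 terminates with all of $\mc{S}$ having bounding sets of size at most $3$ (\cref{lem:phase-1}), and within Phase~2 the only operations altering a bounding set of a vertex of $\mc{S}$ are the $\tsc{disjoint}$ calls on $v_1,\dots,v_{i-1}$, each of which by (D1) of \cref{lem:disjoint} leaves the relevant bounding set of size at most $2$. Since $|N(v_i)\cap\mc{S}|\le\Delta/3$ by \cref{prop:seeding}, the union $\bigcup_{w\in N(v_i)\cap\mc{S}}L(w)$ has size at most $3\cdot(\Delta/3)=\Delta$, so a set $A$ of size exactly $\Delta$ completely containing it exists.

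For (ii), I would track the structure of $S_L(v_i)$ immediately before $\tsc{disjoint}$ is applied to $v_i$. Split $N(v_i)=(N(v_i)\cap\mc{S})\sqcup(N(v_i)\cap\mc{S}^c)$. For $w$ in the first part, $L(w)\subseteq A$ by the choice of $A$ (the $\tsc{compress}$ calls of this step touch only non-$\mc{S}$ neighbors, and $\tsc{disjoint}$ on $v_i$ has not yet run). For $w$ in the second part, by (C1) of \cref{lem:compress} its bounding set after the $\tsc{compress}$ call equals $A\cup\{c_w\}$ for a single color $c_w\notin A$, and subsequent $\tsc{compress}$ calls in this step do not alter $L(w)$. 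Hence $S_L(v_i)\subseteq A\cup\{c_w : w\in N(v_i)\cap\mc{S}^c\}$, so
\[|S_L(v_i)|\le |A|+|N(v_i)\cap\mc{S}^c|\le \Delta+(1-\eta)\Delta=(2-\eta)\Delta,\]
using the first bound of \cref{prop:seeding}.

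To conclude, note that the right-hand side of the $\tsc{disjoint}$ promise equals $(k-\Delta)\cdot\frac{k-|Q_L(v_i)|}{k-|Q_L(v_i)|-|D_L(v_i)|/2}\ge k-\Delta$ since $|D_L(v_i)|\ge 0$, while $k>(3-\eta)\Delta$ gives $k-\Delta>(2-\eta)\Delta\ge|S_L(v_i)|\ge|S_L(v_i)|-|Q_L(v_i)|$, which is the left-hand side; so the promise holds (and, unlike in Phase~3, we do not even need the refinement coming from $Q_L$ and $D_L$). The only place calling for care is the bookkeeping in (i) --- confirming that the $\mc{S}$-neighbors still have bounding sets of size at most $3$ at the moment $A$ is chosen, and that the intra-phase $\tsc{compress}$ calls leave this intact --- but there is no substantive obstacle here; the content of the lemma is exactly the inequality $(2-\eta)\Delta< k-\Delta$.
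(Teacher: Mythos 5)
Your proposal is correct and takes essentially the same approach as the paper: establish that $A$ exists (using $|N(v_i)\cap\mc{S}|\le\Delta/3$ and the size-at-most-$3$ bounding sets on $\mc{S}$), bound $|S_L(v_i)|\le(2-\eta)\Delta$ via the $(1-\eta)\Delta$ cap on non-$\mc{S}$ neighbors each contributing at most one color outside $A$, and conclude from $(2-\eta)\Delta<k-\Delta\le$ RHS of the $\tsc{disjoint}$ promise. Your bookkeeping is in fact slightly more careful than the paper's (explicitly noting that intra-phase $\tsc{disjoint}$ calls only shrink the $\mc{S}$-vertices' bounding sets, so the size-$\le 3$ invariant is preserved throughout Phase 2, whereas the paper just cites ``at the end of Phase 1''), but the argument is the same.
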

\begin{proof}
The existence of the set $A$ follows as in the previous proof by noting that each vertex has at most $\Delta/3$ neighbors in $\mc{S}$, each of which has a bounding list of size at most $3$ at the end of Phase 1. 

Further, since each $v_i$ has at most $(1-\eta)\Delta$ neighbors outside $\mc{S}$, it follows that after applying $\tsc{compress}$ to all neighbors of $v_i$ not in $\mc{S}$, we have $|S_L(v)|\le \Delta+(1-\eta)\Delta = (2-\eta) \Delta$, where the first inequality follows since the $(1-\eta)\Delta$ neighbors of $v_i$ outside $\mc{S}$ can each contribute at most $1$ color outside of $A$ to $S_L(v)$. 
The claim now follows upon noting that
\[|S_L(v)|-|Q_L(v)|\le(2-\eta)\Delta < k-\Delta\le(k-\Delta)\bigg(\frac{k-|Q_L(v)|}{k-|Q_L(v)|-|D_L(v)|/2}\bigg).\qedhere\]
\end{proof}

The analysis of Phase 3 is more nontrivial due to the intricate nature of choosing the set $A$ of size $\Delta$. Ultimately the proof is a routine casework check. 
\begin{lemma}\label{lem:phase-3}
Phase $3$ succeeds deterministically, i.e., every application of $\tsc{disjoint}$ is guaranteed to satisfy its promise.
\end{lemma}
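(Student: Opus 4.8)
The plan is to verify that, for every vertex $v = v_i$ with $s+1 \le i \le n$, at the moment \tsc{disjoint} is applied in Phase 3, the promise $|S_L(v)| - |Q_L(v)| < (k-\Delta)\big(\frac{k - |Q_L(v)|}{k - |Q_L(v)| - |D_L(v)|/2}\big)$ holds; since $k \ge (8/3 + o(1))\Delta$, it suffices to show this inequality is implied by a bound of the (rough) form $|S_L(v)| - |Q_L(v)| \lesssim 8\Delta/3$, together with control of $|D_L(v)|$. The structure of $S_L(v)$ at this point splits according to the $A$-construction: the marked neighbors of $v$ contribute colors from $Q_{L_m}(v) \cup E_{L_m}(v) \cup D_{L_m}(v)$ (with $L_m$ the bounding list restricted to marked neighbors of $v$), and the unmarked neighbors, to which we have just applied \tsc{compress} with associated set $A$, each contribute at most one color outside $A$. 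So $S_L(v) \subseteq A \cup (\text{at most } \Delta \text{ extra colors from unmarked neighbors}) \cup (\text{whatever marked-neighbor colors lie outside } A)$.

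First I would set up notation: write $d_m = |N(v) \cap \text{marked}|$ and $d_u = |N(v)\setminus\text{marked}| = \Delta - d_m$ (up to slack), and observe $A$ has size $\Delta$ and was built greedily by first absorbing $Q_{L_m}(v) \cup E_{L_m}(v)$, then adding pairs $L_m(w)$ for $w \in N_{L_m}^\ast(v)$, then arbitrary filler. The casework then depends on whether $A$ already exhausts $Q_{L_m}(v) \cup E_{L_m}(v)$: if $|Q_{L_m}(v)| + |E_{L_m}(v)| \ge \Delta$, then $A \subseteq Q_{L_m}(v) \cup E_{L_m}(v) \cup D_{L_m}(v)$ has a large forced/non-disjoint part and $|S_L(v)| \le \Delta + d_u \le \Delta + (1-\eta)\Delta$ minus the savings from $Q_L(v)$; if instead $A$ contains all of $Q_{L_m}(v) \cup E_{L_m}(v)$ plus some of $D_{L_m}(v)$ plus filler, then essentially all of the marked-neighbor colors outside $A$ are disjoint-pair colors, which boosts $|D_L(v)|$ and relaxes the promise via the $\frac{k - Q}{k - Q - D/2}$ factor. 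The crux is a careful accounting: every marked color not in $A$ is either forced (in $Q_L(v)$, hence subtracted on the left), or was a disjoint pair we chose not to include (contributing to $D_L(v)$ on the right), so the "excess" $|S_L(v)| - |Q_L(v)| - |D_L(v)|/2$ stays below $(k-\Delta)$; and on the bounded-degree side, the unmarked contribution $d_u \le (1-\eta)\Delta$ plus $|A| = \Delta$ gives the baseline $\le (2 - \eta)\Delta + (\text{corrections})$, which one then reconciles with the needed $8\Delta/3$ threshold using $\eta = 1/3 - 2\sqrt{(\log\Delta)/\Delta}$ and $D_L(v) \le 2\Delta$.

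The plan is therefore: (1) fix the notation for $A$, $L_m$, marked/unmarked neighbors, and recall $|A| = \Delta$; (2) do a case split on which ``tier'' of the greedy construction of $A$ we terminate in (all of $Q_{L_m} \cup E_{L_m}$ fits, versus some of $D_{L_m}$ is used, versus filler is used), translating each case into bounds on $|S_L(v)|$, $|Q_L(v)|$, and $|D_L(v)|$; (3) in each case, plug into the promise inequality, using monotonicity of $x \mapsto \frac{k-Q}{k-Q-x/2}$ and the elementary facts $D_L(v) \in [0, 2\Delta]$, $Q_L(v) \ge 0$, $k - Q - D/2 \le k \le 2(k - \Delta)$ (as used already in the proof of \cref{lem:drift}); (4) check the worst case gives $|S_L(v)| - |Q_L(v)| < (k-\Delta)\frac{k-Q}{k-Q-D/2}$ precisely when $k \ge 8\Delta/3 + C\sqrt{\Delta\log\Delta}$, the $\sqrt{\Delta\log\Delta}$ coming from the slack in $\eta$.

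The main obstacle I expect is the bookkeeping in step (2)--(3): one must argue that the greedy choice of $A$ genuinely forces the marked colors landing outside $A$ to be ``cheap'' — either forced colors (helping via $Q_L(v)$) or disjoint pairs (helping via $D_L(v)$) — and never ``expensive'' free colors in $E_L(v)$, because those were prioritized into $A$ first. Making this precise requires tracking that $Q_{L_m}(v) \cup E_{L_m}(v)$ and then $D_{L_m}(v)$ are consumed in order, and that the only way a marked color escapes $A$ is if $A$ filled up before reaching it, i.e. only when $|A| = \Delta$ was already attained. Handling the boundary subcase where $A$ runs out partway through the disjoint pairs $D_{L_m}(v)$ (so that $D_L(v)$ gets ``half-credit'' from a split pair) is the fiddliest point, and is where one must be careful that splitting a pair $L_m(w)$ does not destroy the disjointness property needed for $w$ to still count toward $N_L^\ast(v)$. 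Once the casework is organized, each individual inequality is a one-line computation of the type already appearing in \cref{lem:phase-2} and \cref{lem:drift}.
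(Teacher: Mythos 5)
Your high-level plan is on the right track: the case split by greedy ``tier'' of $A$, the accounting of marked versus unmarked contributions to $S_L(v)$ after the \tsc{compress} sweep, and the substitution $k=\kappa\Delta$ with a constrained optimization are exactly the scaffolding the paper uses. But there is a genuine gap in the case $|Q_{L_m}(v)|+|E_{L_m}(v)|\ge\Delta$ (so that $A\subsetneq Q_{L_m}(v)\cup E_{L_m}(v)$), and it is the opposite of the ``fiddly'' subcase you flag. Here the ``crux'' intuition you rely on -- that every marked color escaping $A$ is either forced (helping through $Q_L(v)$) or a disjoint-pair color (helping through $D_L(v)$) -- is simply false: the colors in $E_{L_m}(v)\setminus A$ escape $A$, are not forced, and are by definition \emph{not} in $D_L(v)$, so they are ``expensive.'' Your claim that ``$|S_L(v)|\le\Delta+d_u$ minus the savings from $Q_L(v)$'' is therefore wrong; the correct bound in this case is $|S_L(v)|\le X+Y+\eta'\Delta$ with $X=|Q_{L_m}\cup E_{L_m}|\ge\Delta$ and $Y=|D_{L_m}(v)|$, and nothing you have written bounds $Y$.

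The paper closes this case not by the ``cheap escapees'' argument but by a structural weight-counting step reusing the idea behind \cref{eq:weights}: the bounding list of any marked $w$ intersecting $Q_{L_m}(v)\cup E_{L_m}(v)$ carries weight at most $3/2$, so having $X\ge\Delta$ forces at least $2X/3$ marked neighbors to intersect $Q_{L_m}\cup E_{L_m}$, which leaves at most $\Delta-\eta'\Delta-2X/3$ marked neighbors available to contribute to $D_{L_m}(v)$. This yields $|S_{L_m}(v)|\le X+2(\Delta-\eta'\Delta-2X/3)\le 5\Delta/3-2\eta'\Delta$ and hence $|S_L(v)|\le(5/3-\eta')\Delta\le(2-\eta)\Delta$, after which the promise follows as in \cref{lem:phase-2}. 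Your sketch has no mechanism to produce this bound. Two smaller points: the subcase you single out as fiddliest (a pair of $D_{L_m}(v)$ being split by $A$) is in fact handled painlessly by the floor $\lfloor(X+Y-\Delta)/2\rfloor$ -- the real subtlety in the boundary case $Q_{L_m}\cup E_{L_m}\subseteq A\subseteq S_{L_m}(v)$ is that colors introduced by \tsc{compress} on unmarked neighbors can land in $D_{L_m}(v)$ and thereby knock a pair out of $N_L^\ast(v)$, which is what the paper's $t\Delta$-bookkeeping tracks; and you leave the final constrained optimization (the paper's \cref{eq:optimization} through \cref{eq:optimization-3}) entirely unverified, which is where the threshold $8\Delta/3$ actually emerges.
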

\begin{proof}
Let $v = v_i$ for some $s+1 \le i \le n$. 
As in the definition of Phase $3$, let $L_m$ be current the bounding list, restricted to marked neighbors of $v$. Also, let $L$ be the bounding list \emph{after} applying $\tsc{compress}$ to all unmarked neighbors of $v$. Note that $v$ has at most $(1-\eta)\Delta$ unmarked neighbors; suppose that it has $\eta'\Delta$ unmarked neighbors (this is true even for the neighbors of $v$ in $\mc{S}^{c}$). Recall that $\eta = 1/3 - 2\sqrt{(\log \Delta)/\Delta}$ for a sufficiently large constant $C$ and that $S_{L_m}(v)$ is the disjoint union $Q_{L_m}(v) \cup E_{L_m}(v) \cup D_{L_m}(v)$. Recall also that for all marked neighbors $w$ of $v_i$, $|L_{m}(w)|\le 2$.  \\

\textbf{Case 1: }$|S_{L_m}(v)| \le \Delta$. In this case, we must have that $S_{L_m}(v) \subset A$, since colors in $S_{L_m}(v)$ are chosen to be in $A$ before any other colors.  Then, as in the proof of \cref{lem:phase-2}, we see that $|S_L(v)| \le \Delta + (1-\eta)\Delta$, so that as before,
\[|S_L(v)| - |Q_L(v)|\le (2-\eta)\Delta< k - \Delta\le (k-\Delta)\frac{k-|Q_L(v)|}{k-|Q_L(v)|-|D_L(v)|/2}.\]

\textbf{Case 2: }$A\subseteq Q_{L_m}(v)\cup E_{L_m}(v)$.  
In particular, we must have $X = |Q_{L_m}(v)\cup E_{L_m}(v)|\ge\Delta$. Let $Y = |D_{L_m}(v)|$. Since the bounding list of each marked vertex has size at most $2$, we may use a similar argument as in the proof of \cref{eq:weights} to see that the total weight (as defined there) of the bounding list of each marked vertex $w$ intersecting $Q_{L_m}(v) \cup E_{L_m}(v)$ is at most $3/2$. Since $X$ is at least the sum of all the weights in all such lists (counting colors multiple times), it follows that there are at least $2X/3$ different (marked) $w\in N(v)$ with $L_m(w)$ intersecting $Q_{L_m}(v) \cup E_{L_m}(v)$.
Therefore there are at most $\Delta-\eta'\Delta-2X/3$ marked neighbors intersecting $D_{L_m}(v)$, so that 
\[|S_{L_m}(v)|\le X + 2(\Delta-\eta'\Delta - 2X/3)\le 5\Delta/3 - 2\eta'\Delta.\]
Finally, after applying  $\tsc{compress}$ to the unmarked neighbors of $v$, we gain an additional at most $\eta'\Delta$ elements. Thus $|S_L(v)|\le (5/3-\eta')\Delta$, and as above, the result follows immediately since $5/3\le 2-\eta$.\\

\textbf{Case 3: } $Q_{L_m}(v)\cup E_{L_m}(v)\subseteq A\subseteq S_{L_m}(v)$. Again let $X = |Q_{L_m}(v)\cup E_{L_m}(v)|$ and $Y = |D_{L_m}(v)|$. Thus $X\le\Delta\le X + Y$.

First, by repeating the computation in Case 2, but with the trivial lower bound $X \ge 0$, we see that $X+Y \le 2(1-\eta')\Delta$ and  $|S_L(v)|\le 2(1-\eta')\Delta+\eta'\Delta = (2-\eta')\Delta$. Thus if $\eta'\ge\eta$, we are done as before. Hence, we may assume that  $\eta'\in[0,\eta]$.

We ultimately want to check the condition
\begin{equation}
|S_L(v)|-|Q_L(v)| < (k-\Delta)\frac{k-|Q_L(v)|}{k-|Q_L(v)|-|D_L(v)|/2}.
\end{equation}
Since the left hand side is decreasing and the right hand side is increasing in $|Q_L(v)|$, it suffices to check $|S_L(v)| < (k-\Delta)k/(k-|D_L(v)|/2)$, i.e.
\begin{equation}\label{eq:quadratic-check}
|S_L(v)|\bigg(k-\frac{|D_L(v)|}{2}\bigg) < k(k-\Delta).
\end{equation}
Note that there are at least $\lfloor (X+Y-\Delta)/2\rfloor$ pairs of colors $L_m(w)$, for $w\in N_{L_m}^\ast(v)$, inside $D_{L_m}(v)\setminus A$.
Note also that every additional color coming from the $\eta' \Delta$ unmarked neighbors could be one of the following: (i) a color outside of $S_{L_m}(v)$ (ii) a color in $D_{L_m}(v)$ (observe that every such color prevents two colors from $D_{L_m}(v)$ from appearing in $D_L(v)$), and 
(iii) a color in $A \setminus D_{L_m}(v)$. Suppose we have $s\Delta$ colors of type (i) and $t\Delta$ colors of type (ii). Then,   
\[|S_L(v)| = X + Y + s\Delta,\quad|D_L(v)|\ge 2\lfloor(X+Y-\Delta)/2\rfloor-2t\Delta,\quad s + t \le \eta'.\]
Observe that if $|S_L(v)| = X + Y + s\Delta \le k-\Delta$, then \cref{eq:quadratic-check} is trivially satisfied. Therefore, we may assume that $X + Y \ge k - \Delta - s\Delta \ge (2-\eta - s)\Delta$. 
Finally, let $X = x\Delta$, $Y = y\Delta$, and $k = \kappa\Delta$, so that \cref{eq:quadratic-check} follows if
\begin{equation}
\label{eq:optimization}
(x+y+s)\bigg(\kappa+t-\frac{x+y-1-\Delta^{-1}}{2}\bigg) < \kappa(\kappa-1).
\end{equation}
From the discussion above, we have the constraints $z = x+y\in[2-\eta-s,2-2\eta']$, $s,t\ge 0$, $s+t\le \eta'$, $\eta'\in[0,\eta]$, and  $\kappa\ge 3-\eta$. Recall also that $\eta$ is a fixed constant less than $1/3$. 
Also, increasing $S_L(v) \setminus D_L(v)$ can only make \cref{eq:quadratic-check} harder to satisfy, we may assume that $s + t = \eta'$. 

We see by taking derivatives that as long as $\kappa\ge 3/2$ (which is true in our case), the condition in \cref{eq:optimization} is most restrictive when $\kappa$ is taken smaller. Therefore, we may let $\kappa = 3-\eta$, to see that \cref{eq:optimization} is implied by 
\begin{equation}
\label{eq:optimization-1}
(z+s)\bigg(3-\eta+t-\frac{z-1-\Delta^{-1}}{2}\bigg) < (2-\eta)(3-\eta).
\end{equation}
Next, we see by taking derivatives that for $\eta\in[0,1/3]$, the condition in \cref{eq:optimization-1} is strictly more restrictive when $\eta = 1/3$. Therefore, by taking $\eta = 1/3$ (note that for us, $\eta$ is strictly smaller than $1/3$), we see that \cref{eq:optimization-1} is implied by 
\begin{equation}
\label{eq:optimization-2}
(z+s)\bigg(\frac{8}{3}+t-\frac{z-1-\Delta^{-1}}{2}\bigg)\le\frac{40}{9}.
\end{equation}
At this point, we assume $\Delta\ge 9$, and reduce to checking
\begin{equation}
\label{eq:optimization-3}
(z+s)\bigg(\frac{29}{9}+t-\frac{z}{2}\bigg)\le\frac{40}{9}
\end{equation}
on the region carved out by $s,t\ge 0$, $s + t\le 1/3$, $z\ge 5/3-s$, and $z\le 2-2s-2t$.

Note that the left hand side is a downward quadratic in $z$ with maximum at $z = 20/9+t-s/2$, which is always bigger than $2-2s-2t$. Hence, the left hand side is maximized at $2-2s-2t$, which yields
\[(2-s-2t)\bigg(\frac{20}{9}+s+2t\bigg)\stackrel{?}{\le}\frac{40}{9};\]
this is clearly true since $s+2t\ge 0$.
\end{proof}

\subsection{Putting everything together}
We now quickly check that (Q1) and (Q2) follow from our work so far. Indeed, \cref{lem:phase-1,lem:phase-2,lem:phase-3,lem:drift} shows that (Q1) is true. For (Q2), we note that (P2) follows from (C2), (S2) and (D2), and that (P1) follows from (C3), (S3), and (D3). Moreover, the same running time analysis as in \cite{BC20} shows that (C4), (S4) and (D4) easily imply (P4). Finally, \cref{lem:drift} implies (P3), which completes our analysis.

\section{Conclusion and Open Problems}\label{sec:conclusion}
We first briefly elaborate on how the above analysis can be extended to push slightly beyond $k \geq (8/3 + o(1))\Delta$, i.e., to perfectly sample $(8/3-\epsilon)\Delta$ colors for some absolute constant $\epsilon \approx 10^{-2}$. The current algorithm (for sufficiently large $\Delta$) is only limited at $k = (8/3 + o(1))\Delta$ in Phase $2$ (although this requires performing the analysis in \cref{lem:phase-3} more carefully). In order to improve Phase $2$, a variant of \cref{alg:disjoint} which allows for disjoint triples works. (There is an even more efficient routine using both disjoint pairs and disjoint triples.)

However, all of these techniques are currently limited at $k > 5\Delta/2$ due to \cref{lem:drift}; the extremal configuration limiting the analysis here has shadows of the configurations which limit Jerrum's \cite{Jer95} analysis for approximate sampling at $k > 2\Delta$ using the Glauber dynamics. However, incorporating the techniques of Vigoda \cite{Vig00} and more general path-coupling ideas \cite{BD97} may allow one to break this barrier, but the interaction of these techniques with the bounding chain framework of \cite{Hag98,Hub98} is nontrivial and remains an interesting open question.

\bibliographystyle{amsplain0.bst}
\bibliography{main.bib}

\appendix

\section{Proof of \texorpdfstring{\cref{prop:seeding}}{Proposition 3.1}}
\label{sec:proof-find-seeding-set}
\begin{proof}
The proof uses the symmetric Lov\'asz Local Lemma (LLL; see \cite{AS16}). We sample $\mc{S}$ by selecting each vertex in $V(G)$ independently with probability $(\eta + 1/3)/2$. For each $v \in V(G)$, let $\mc{B}_v$ denote the `bad' event that $v$ does not satisfy the condition in \cref{eq:seeding-conclusion}. Then, it is straightforward to verify that each bad event is mutually independent from all but at most $d = \Delta^2$ other bad events (corresponding to vertices with distance at most $2$ from $v$). Moreover, a standard application of the Chernoff bound shows that each bad event occurs with probability at most $p = \exp(-\Omega_\eta(\Delta))$. Thus, for $\Delta\ge C_\eta$, we trivially have $ep(d+1) < 1$, which guarantees that a set $S$ satisfying \cref{eq:seeding-conclusion} for all $v \in V(G)$ exists.

In order to algorithmically generate such a set, we use the algorithmic version of LLL due to Moser and Tardos \cite{MT10}. We set $x(v) = ep < 1/(d+1)$ for all $v\in V(G)$, which can easily be checked to satisfy the hypotheses of \cite[Theorem~1.2]{MT10}. Therefore, by  \cite[Theorem~1.2]{MT10}, the expected number of `resampling operations' is at most
\[\sum_{v\in V(G)}\frac{x(v)}{1-x(v)} < 2\sum_{v\in V(G)}x(v) = O(n\exp(-\Omega_\eta(\Delta))).\]

For an analysis of the running time, note that it takes time $O(n\Delta)$ to initially sample and compute the number of neighbors in $\mathcal{S}, \mathcal{S}^{c}$ each vertex has, which we maintain as an array throughout. We also maintain a binary heap with the set of vertices violating \cref{eq:seeding-conclusion}. Since each resampling operation amounts to resampling the neighbors of a violating vertex, we see that, in particular, each resampling operation requires updating at most $\Delta^{2}$ array elements (corresponding to vertices within distance $2$ of the violating vertex at which resampling occurs). Finally, we remove vertices which no longer violate \cref{eq:seeding-conclusion} from our binary heap, and add vertices which have turned into violators to the binary heap, which takes time $O(\Delta^2\log n)$. Therefore the running time is $O(n\Delta + n(\log n)\Delta^2\exp(-\Omega_\eta(\Delta)))$, which is $O(n\Delta + n\log n)$ as desired. Note that we terminate early if the number of resampling operations is twice the expectation in order to obtain the failure probability (by Markov's inequality) and running time guarantee in the statement of the proposition. 
\end{proof}

\end{document}